\newcommand{\nc}{\newcommand}
\nc{\rnc}{\renewcommand} \nc{\nev}{\newenvironment}
\rnc{\subsection}{\secdef\ssa\ssb}
\nc{\ssa}[2][default]{\par\vspace{1ex}\refstepcounter{subsection}\noindent\textbf{\thesubsection.
#1. }} \nc{\ssb}[1]{\par\vspace{2ex}\noindent\textbf{#1. }}
\rnc{\subsubsection}{\secdef\sssa\sssb}
\nc{\sssa}[2][default]{\par\vspace{1ex}\refstepcounter{subsubsection}\noindent\textit{\thesubsubsection.
#1. }} \nc{\sssb}[1]{\par\vspace{1ex}\noindent\textit{#1. }}
\rnc{\@seccntformat}[1]{{\normalfont\bfseries{\csname
the#1\endcsname}\hspace{1pt}.\hspace{0.4em}}}
\rnc{\section}{\@startsection
        {section}%
        {1}%
        {0mm}%
        {-\baselineskip}%
        {0.5\baselineskip}%
        {\normalfont\normalsize\bfseries\centering}%
}
\renewcommand{\@makecaption}[2]{\begin{center}#1. #2\end{center}}
\newtheorem{theo}{Theorem}[section]
\newtheorem{lem}[theo]{Lemma}
\newtheorem{cor}[theo]{Corollary}
\newtheorem{prop}[theo]{Proposition}
\theoremstyle{definition}
\newtheorem{defn}[theo]{Definition}
\newtheorem{rem}[theo]{Remark}
\rnc{\proof}[1][{}]{\smallskip\noindent\textit{Proof #1: }}
\nc{\proofend}{\hfill$\Box$\vspace{\topsep}\par}
\rnc{\labelenumi}{(\arabic{enumi})} \rnc{\labelitemi}{\text{--}}
\rnc{\phi}{\varphi} \rnc{\epsilon}{\varepsilon}
\nc{\bigmid}{\;\big|\;} \nc{\Bigmid}{\;\Big|\;}
\rnc{\max}{\textup{max}} \rnc{\min}{\textup{min}}
\rnc{\log}{\textup{log}\;}
\newlength{\probwidth}
\nc{\prob}[3][9]{
\begin{center}
  \normalfont\fbox{
   \begin{tabular}[t]{
     rp{#1cm}}\textit{Instance:}&#2. \\
     \textit{Problem:}&#3
   \end{tabular}}
\end{center}}
\nc{\pprob}[4][9]{
\begin{center}
   \normalfont\fbox{
    \begin{tabular}[t]{
     rp{#1cm}}\textit{Instance:}&#2. \\
     \textit{Parameter:}&#3. \\
     \textit{Problem:}&#4
   \end{tabular}}
\end{center}}
\nc{\nprob}[4][9]{
\begin{center}
  \normalfont\fbox{

\addtolength{\probwidth}{#1cm}\parbox{\probwidth}{\textsc{#2}\\\hspace*{1.5em}
     \begin{tabular}[t]{
      rp{#1cm}}\textit{Instance:}&#3. \\
      \textit{Problem:}&#4
     \end{tabular}}}
\end{center}}
\nc{\npprob}[5][9]{
\begin{center}
  \normalfont\fbox{

\addtolength{\probwidth}{#1cm}\parbox{\probwidth}{\textsc{#2}\\\hspace*{1.5em}
    \begin{tabular}[t]{
     rp{#1cm}}\textit{Instance:}&#3. \\
     \textit{Parameter:}&#4. \\
     \textit{Problem:}&#5
    \end{tabular}}}
\end{center}}
\nc{\nppxrob}[5][9]{ \normalfont\fbox{

\addtolength{\probwidth}{#1cm}\parbox{\probwidth}{\textsc{#2}\\\hspace*{1.5em}
   \begin{tabular}[t]{
    rp{#1cm}}\textit{Instance:}&#3. \\
    \textit{Parameter:}&#4. \\
    \textit{Problem:}&#5
   \end{tabular}}}}
\nc{\nppprob}[5][4]{
\begin{center}
  \normalfont\fbox{

\addtolength{\probwidth}{#1cm}\parbox{\probwidth}{\textsc{#2}\\\hspace*{1.5em}
    \begin{tabular}[t]{
     rp{#1cm}}\textit{Instance:}&#3. \\
     \textit{Parameter:}&#4. \\
     \textit{Problem:}&#5
    \end{tabular}}}
\end{center}}
\nc{\noptprob}[6][9]{
\begin{center}
  \normalfont\fbox{

\addtolength{\probwidth}{#1cm}\parbox{\probwidth}{\textsc{#2}\\\hspace*{1.5em}
    \begin{tabular}[t]{
     rp{#1cm}}\textit{Instance:}&#3. \\
     \textit{Solution:}&#4. \\
     \textit{Cost:}&#5. \\
     \textit{Goal:}&#6.
    \end{tabular}}}
\end{center}}
\nc{\FOR}{\textbf{for}}
\nc{\FORALL}{\textbf{for all}}
\nc{\TO}{\textbf{to}}
\nc{\DO}{\textbf{do}}
\nc{\OD}{\textbf{od}}
\nc{\IF}{\textbf{if}}
\nc{\FI}{\textbf{fi}}
\nc{\THEN}{\textbf{then}}
\nc{\ELSE}{\textbf{else}}
\nc{\WHILE}{\textbf{while}}
\nc{\REPEAT}{\textbf{repeat}}
\nc{\UNTIL}{\textbf{until}}
\nc{\OR}{\textbf{or}}
\nc{\AND}{\textbf{and}}
\nc{\PRINT}{\textbf{print}}
\nc{\im}[1]{\item\hspace{#1cm}}
\nc{\fpcl}[1]{\left[#1\right]_{\text{\upshape fp}}}
\nc{\pr}{\le^{\text{\normalfont fp}}_m} \nc{\FPT}{\textup{FPT}}
\nc{\EPT}{\textup{EPT}} \nc{\SUBEPT}{\textup{SUBEPT}}
\nc{\fpt}{\textup{fpt}} \nc{\fptT}{\textup{fpt-T}}
\nc{\W}[1]{\text{$\textup{W}[#1]$}}
\nc{\M}[1]{\text{$\textup{M}[#1]$}}
\nc{\MS}[2]{\text{$\textup{M}^{#1}[#2]$}}
\nc{\MINI}[1]{\mbox{\small \rm MINI[$#1$]}}
\nc{\WP}{\textup{W[P]}} \nc{\AWP}{\textup{AW[P]}}
\rnc{\S}[1]{\text{$\textup{S}[#1]$}} \nc{\SP}{\textup{S[P]}}
\nc{\MP}{\textup{M[P]}}
\nc{\PTIME}{\textup{PTIME}} \nc{\APTIME}{\textup{APTIME}}
\nc{\PSPACE}{\textup{PSPACE}} \nc{\NP}{\textup{NP}}
\nc{\DTIME}{\textup{DTIME}}
\nc{\se}{\subseteq} \nc{\re}{\rightarrow}
\nc{\LOEFF}[1]{{o}^{\rm eff}(#1)}
\nc{\PNPTC}{\mbox{$\textup{P}[{\textsc{tc}}]\ne
\textup{NP}[{\textsc{tc}}]$}} \nc{\str}[1]{\ensuremath{\mathcal #1}}
\nc{\cls}[1]{\ensuremath{\mathbf #1}}
\nc{\algo}[1]{{\mathbb #1}}
\nc{\NAT}{{\mathbb N}}
\nc{\VC}{\textsc{Vertex-Cover}} \nc{\IS}{\textsc{Independent-Set}}
\nc{\Cli}{\textsc{Clique}} \nc{\DS}{\textsc{Dominating-Set}}
\nc{\TSAT}{\textsc{3-Sat}}
\nc{\CNF}{\textup{CNF}} \nc{\WSAT}{\textsc{WSat}}
\nc{\AWSAT}{\textsc{AWSat}} \nc{\SAT}{\textsc{Sat}}
\nc{\ASAT}{\textsc{ASat}} \nc{\CIRC}{\textsc{Circ}}
\nc{\PROP}{\textsc{Prop}}
\nc{\nva}{\textup{nv}} \nc{\ncl}{\textup{nc}}
\nc{\ETH}{\textup{ETH}}
\nc{\Wone}{\textup{W[1]}}
\nc{\serf}{\textup{serf}} \nc{\serfT}{\textup{serf-T}}
\nc{\var}{\textup{var}}
\nc{\NUXP}{\textup{XP}_{\rm nu}} \nc{\XP}{\textup{XP}}
\nc{\SNP}{\textup{SNP}}
\nc{\ept}{{\rm ept}}
\nc{\E}{\textup{E}}
\nc{\HALT}{\textsc{Halt}}
\nc{\TM}{\textsc{TM}} \nc{\TMBA}{\textsc{TMBA}}
\nc{\pCLIQUE}{\textup{$p$-$\textsc{Clique}$}}
\nc{\ceil}[1]{\left\lceil#1\right\rceil}
\nc{\floor}[1]{\left\lfloor#1\right\rfloor}
\nc{\bende}{\eqno$\Box$} \nc{\benda}{\tag*{$\Box$}}
\nc{\pa}{\kappa}
\nc{\co}{\textup{co-}}
\rnc{\L}{\textup{LOGSPACE}}
\nc{\NL}{\textup{NLOGSPACE}}
\rnc{\P}{\textup{P}}
\rnc{\angle}[1]{\langle #1\rangle}
\nc{\rand}[1]{\marginpar{\raggedright\footnotesize #1}}
\nc{\yrand}[1]{\rand{\textbf{Y: }#1}}
\nc{\brand}[1]{\rand{\textbf{B: }#1}}
\nc{\Ppoly}{\textup{P}/\text{\small poly}}
\nc{\f}{\mathbf f}
\nc{\s}{\mathbf s}
\nc{\tim}{\textup{time}}
\nc{\sat}{\textup{sat}}
\nc{\rank}{\textup{rank}}
\nc{\PC}{\mathbf{PC}}
\nc{\bin}{{\rm in}}
\nc{\bout}{{\rm out}}
\nc{\Mix}{\textsc{Mix}}
\nc{\MIX}{\mathbf{MIX}}
\nc{\tdeg}{\textup{tdeg}}
\nc{\lspan}{\textup{span}}
\nc{\ma}[1]{\mathbb #1}
\nc{\bet}[1]{\| #1\|}
\nc{\EFM}[2]{(#1,#2)}
\nc{\EF}{\text{Ehren\-feucht\--Fra\"i\-ss\'e}}
\nc{\AF}{\textup{AF}}
\nc{\supp}{\textup{supp}}
\nc{\ar}{\textup{ar}}
\nc{\pol}{\textup{pol-}}
\nc{\equivm}{\equiv_{\LFP_m}}
\nc{\equivfo}{\equiv_{\FO_m}}
\nc{\ERRE}{Erd\H{o}s-R\'enyi}
\nc{\PCC}{\textup{PCC}}
\nc{\TCOL}{\textsc{3-Col}}
\nc{\GI}{\textup{GI}}
\nc{\EX}{\textup{E}}
\nc{\Var}{\textup{Var}}
\nc{\size}{\textup{size}}
\nc{\DNF}{\textup{DNF}}
\nc{\n}{\tilde n}
\nc{\dotcup}{\;\dot\cup\;}
\nc{\ds}{\gamma}
\nc{\mds}{\ensuremath{\textsc{Min-Dominating-Set}}}
\nc{\pds}{\ensuremath{p\textsc{-Dominating-Set}}}
\nc{\pclique}{\ensuremath{p\textsc{-Clique}}}
\nc{\mcs}{\ensuremath{\textsc{Monotone-Circuit-Satisfiability}}}
\nc{\sol}{\textup{sol}}
\nc{\cost}{\textup{cost}}
\nc{\goal}{\textup{goal}}
\nc{\pow}{\text{Pow}}
\begin{document}

\title{The Constant Inapproximability of the Parameterized Dominating Set Problem}
\author{Yijia Chen\\\normalsize School of Computer Science\\
\normalsize Fudan University\\
\normalsize yijiachen@fudan.edu.cn\\
\and
Bingkai Lin\\\normalsize Department of Computer Science\\
\normalsize University of Tokyo\\
\normalsize lin@is.s.u-tokyo.ac.jp}

\date{}

\maketitle

\begin{abstract}
We prove that there is no \fpt-algorithm that can approximate the
dominating set problem with any constant ratio, unless $\FPT= \W 1$. Our
hardness reduction is built on the second author's recent $\W 1$-hardness
proof of the biclique problem~\cite{lin15}. This yields, among other
things, a proof without the PCP machinery that the classical dominating
set problem has no polynomial time constant approximation under the
exponential time hypothesis.
\end{abstract}

{

\rnc{\c}{\textbf{\emph{c}}} \rnc{\a}{\textbf{\emph{a}}}
\rnc{\b}{\textbf{\emph{b}}} \rnc{\i}{\textbf{\emph{i}}}
\rnc{\j}{\textbf{\emph{j}}} \rnc{\u}{\textbf{\emph{u}}}
\rnc{\v}{\textbf{\emph{v}}}

\section{Introduction}
The dominating set problem, or equivalently the set cover problem, was among
the first problems proved to be \NP-hard~\cite{kar72}. Moreover, it has been
long known that the greedy algorithm achieves an approximation ratio
$\approx \ln n$~\cite{joh74,ste74,lov75,cha79,sla97}. And after a sequence
of papers (e.g.~\cite{lunyan94,razsaf97,fei98,alomos06,dinste14}), this is
proved to be best possible. In particular, Raz and Safra~\cite{razsaf97}
showed that the dominating set problem cannot be approximated with ratio
$c\cdot \log n$ for some constance $c\in \mathbb N$ unless $\P=
\NP$~\cite{razsaf97}. Under a stronger assumption $\NP\not\subseteq
\DTIME\left(n^{O(\log\log n)}\right)$ Feige proved that no approximation
within $(1- \varepsilon)\ln n$ is feasible~\cite{fei98}. Finally Dinur and
Steuer established the same lower bound assuming only $\P\ne
\NP$~\cite{dinste14}. However, it is important to note that the
approximation ratio $\ln n$ is measured in terms of the size of an input
graph $G$, instead of $\ds(G)$, i.e., the size of its minimum dominating
set. As a matter of fact, the standard examples for showing the $\Theta(\log
n)$ greedy lower bound have constant-size dominating sets. Thus, the size of
the greedy solutions cannot be bounded by any function of $\ds(G)$. So the
question arises whether there is an approximation algorithm $\mathbb A$ that
always outputs a dominating set whose size can be bounded by
$\rho(\ds(G))\cdot \ds(G)$, where the function $\rho:\mathbb N\to\mathbb N$
is known as the approximation ratio of $\mathbb A$. The constructions
in~\cite{fei98,alomos06} indeed show that we can rule out $\rho(x)\le \ln
x$. To the best of our knowledge, it is not known whether this bound is
tight. For instance, it is still conceivable that there is a polynomial time
algorithm that always outputs a dominating set of size at most
$2^{2^{\ds(G)}}$.

Other than looking for approximate solutions, parameterized
complexity~\cite{dowfel99, flugro06, nie06, dowfel13, cygfom15} approaches
the dominating set problem from a different perspective. With the
expectation that in practice we are mostly interested in graphs with
relatively small dominating sets, algorithms of running time
$2^{\ds(G)}\cdot |G|^{O(1)}$ can still be considered efficient.
Unfortunately, it turns out that the parameterized dominating set problem is
complete for the second level of the so-called W-hierarchy~\cite{dowfel95},
and thus fixed-parameter intractable unless $\FPT= \W 2$. So one natural
follow-up question is whether the problem can be approximated in fpt-time.
More precisely, we aim for an algorithm with running time $f(\gamma(G))\cdot
|G|^{O(1)}$ which always outputs a dominating set of size at most
$\rho(\ds(G))\cdot \ds(G)$. Here, $f:\mathbb N\to \mathbb N$ is an arbitrary
computable function. The study of parameterized approximability was
initiated in~\cite{caihua10, chegro07, dowfel06}. Compared to the classical
polynomial time approximation, the area is still in its very early stage
with few known positive and even less negative results.

\subsection*{Our results}
We prove that any constant-approximation of the parameterized dominating set
problem is \W 1-hard.
\begin{theo}\label{thm:main1}
For any constant $c\in \mathbb N$ there is \emph{no} \fpt-algorithm $\mathbb
A$ such that on every input graph $G$ the algorithm $\mathbb A$ outputs a
dominating set of size at most $c\cdot \ds(G)$, unless $\FPT= \W 1$ (which
implies that the exponential time hypothesis (\ETH) fails).
\end{theo}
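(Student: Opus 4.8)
The plan is to establish Theorem~\ref{thm:main1} by an fpt gap-reduction, building on the $\Wone$-hardness construction for the $k$-biclique problem from~\cite{lin15}. Fix the desired ratio $c\in\NAT$. The starting point is that Lin's reduction, applied to an instance of (multicoloured) $k$-clique, outputs more than a bare biclique instance: it yields a bipartite graph together with a colouring of its two sides into $k$ classes such that a $k$-clique in the source graph produces a $K_{k,k}$ hitting every class, whereas the absence of a $k$-clique bounds, for every colour-respecting choice of $k$ left-vertices, the size of their common neighbourhood well below $k$. In other words, we may assume a \emph{gapped} biclique instance, and the real work is to turn this gap in biclique size into a gap in the domination number.

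From such a gapped biclique instance I would build, in fpt time, a graph $H$ (with gadget sizes depending on $c$) that has a dominating set of size about $k$ when the coloured $K_{k,k}$ is present, but whose every dominating set has size more than $c\cdot k$ when it is not. The graph $H$ is assembled from three kinds of pieces: \emph{selection} gadgets, one per left colour class, each engineered so that any economical dominating set is forced to commit to exactly one left-vertex of that class; \emph{validation} vertices, built from the right side of the bipartite graph (together with auxiliary combinatorial designs whose size is tuned to $c$), which a committed $k$-tuple of left-vertices dominates cheaply precisely when those vertices share a large common neighbourhood, i.e.\ form one side of a biclique; and \emph{propagation} edges tying the two together and absorbing boundary cases. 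In the completeness direction one simply reads off the left side of the biclique, picks the corresponding vertex in each selection gadget, and checks that the validation vertices are then dominated.

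The quantitative heart is the soundness analysis: assuming the source graph has no $k$-clique, one must show that \emph{any} set $S$ of at most $c\cdot k$ vertices of $H$ fails to dominate $H$. The point is that through the validation part such an $S$ can reach only a bounded fraction of the universe --- partial commitments, commitments that violate the colour structure, and attempts to dominate validation vertices by ad hoc local choices rather than through a genuine common neighbourhood are all forced to be wasteful, because Lin's extremal bound caps how many validation vertices any colour-respecting $k$-subset of left-vertices can reach. If the gadget as described delivers only some fixed constant ratio, say $2$, the full ratio $c$ follows by the standard self-improvement for domination: a suitable product of $H$ with itself $t$ times multiplies the domination number, turning a ratio-$2$ gap into a ratio-$2^{t}$ gap, and a $t$ with $2^{t}\ge c$ is a constant, so the blow-up stays polynomial and the parameter only grows to a computable function of~$k$.

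Putting this together, composing the reduction with a hypothetical fpt-algorithm that always outputs a dominating set of size at most $c\cdot\ds(G)$ would, by comparing the output size against the threshold $c\cdot k$, decide the gapped biclique instance, hence the $\Wone$-complete $k$-clique problem, in fpt time, forcing $\FPT=\Wone$; the consequence for \ETH\ is the known fact that $\FPT=\Wone$ refutes \ETH. I expect the main obstacle to be exactly the soundness step: proving that mimicking a real biclique is the \emph{only} cheap way to dominate $H$, and in particular designing the selection and propagation gadgets so that they do not themselves leak a domination strategy costing fewer than $c\cdot k$ vertices --- this is where the strength of Lin's biclique construction, namely its tight control of common neighbourhoods of small vertex sets, must be pushed through the entire reduction.
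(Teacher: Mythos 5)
Your plan starts at the right place (Lin's coloured biclique gap) and your rough sketch of a base gadget --- selection pieces that force one left vertex per colour, ``validation'' vertices coming from $B$ that are only cheaply covered by a genuine biclique, plus glue --- is in the same spirit as the paper's construction for ratio $<3/2$ in Section~\ref{sec:32}. But the amplification step you invoke to boost the ratio from some fixed constant to an arbitrary $c$ does not exist and is exactly the obstruction the paper is forced to engineer around.

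Concretely, you write that ``a suitable product of $H$ with itself $t$ times multiplies the domination number, turning a ratio-$2$ gap into a ratio-$2^t$ gap.'' For cliques (and independent sets) graph products indeed give this for free, but for domination there is no such off-the-shelf tool: the paper explicitly remarks that ``dominating sets for general graph products are notoriously hard to understand'' (citing~\cite{klazma96}), and it is open even for the Cartesian product whether $\gamma(G\,\square\,H)\ge\gamma(G)\gamma(H)$ in general (Vizing's conjecture); the tensor, strong, and lexicographic products each misbehave in their own way, and none multiplies the domination number across a gap in both the yes- and no-directions. Because of this, the paper cannot first build a constant-ratio gap and then amplify it externally. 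Instead it redoes the whole reduction with a $c$-fold \emph{internal} product baked into the vertex set: $B$-vertices are replaced by tuples $V_{\i}=\{\v\in B^c:\beta(\v)=\i\}$ for $\i\in[d]^c$, the selector layer becomes $C=A\times[c]\times[t]$, the validation layer $W$ is indexed by $\v$-tuples and colour tuples $\j\in[\Delta s]^c$, and the soundness bound hinges on a purpose-built counting lemma (Lemma~\ref{lem:productgap}) controlling how many $\j\in[\Delta s]^c$ a small selection of $A$-vertices can reach when each coordinate is restricted to $\le\Delta(s-1)$ colours. Your proposal has no substitute for this: if you try to run the soundness argument through an actual iterated product of the base graph $H$, you have no lower bound on the domination number of the product, so the ``no'' side of the gap collapses. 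This is a genuine missing idea, not a routine detail, and it is precisely why the paper proves Theorem~\ref{thm:mainreduction} as a single parametrized-by-$c$ reduction rather than ratio-$\rho_0$ hardness followed by amplification.
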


In the above statement, clearly we can replace ``\fpt-algorithm'' by
``polynomial time algorithm,'' thereby obtaining the classical
constant-inapproximability of the dominating set problem. But let us mention
that our result is not comparable to the classical version, even if we
restrict ourselves to polynomial time tractability. The assumption $\FPT\ne
\W 1$ or \ETH\ is apparently much stronger than $\P\ne \NP$, and in fact
\ETH\ implies $\NP\not\subseteq \DTIME\left(n^{O(\log\log n)}\right)$ used
in aforementioned Feige's result. But on the other hand, our lower bound
applies even in case that we know in advance that a given graph has no large
dominating set.

\begin{cor}\label{cor:main1}
Let $\beta:\mathbb N\to \mathbb N$ be a nondecreasing and unbounded
computable function. Consider the following promise problem.
\noptprob[6.5]{$\mds_\beta$}{A graph $G=(V,E)$ with $\ds(G)\le
\beta(|V|)$}{A dominating set $D$ of $G$}{$|D|$}{$\min$}
Then there is no polynomial time constant approximation algorithm for
$\mds_{\beta}$, unless $\FPT= \W 1$.
\end{cor}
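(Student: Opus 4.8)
The plan is to reduce to Theorem~\ref{thm:main1}: I will show that a polynomial-time $c$-approximation algorithm $\mathbb{B}$ for $\mds_\beta$ can be bootstrapped into an \fpt-algorithm that $2c$-approximates the unrestricted dominating set problem, which forces $\FPT=\W 1$. So assume $\mathbb{B}$ runs in polynomial time and, on every graph $G$ with $\ds(G)\le\beta(|V(G)|)$, returns a dominating set of size at most $c\cdot\ds(G)$.

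The heart of the argument is a simple padding construction. Given a graph $G$ on $n$ vertices and an integer $k\ge 1$, let $N_k$ be the least $N$ with $\beta(N)\ge k+1$; this is well defined and computable from $k$ because $\beta$ is nondecreasing, unbounded and computable. Form $G'_k$ as the disjoint union of $G$ with a single star having enough leaves that $|V(G'_k)|\ge N_k$ (no padding is needed if already $\beta(n)\ge k+1$). Then (i) $\ds(G'_k)=\ds(G)+1$, since a star has domination number $1$ and domination numbers add over connected components; (ii) for any dominating set $D'$ of $G'_k$, the set $D'\cap V(G)$ is a dominating set of $G$ with $|D'\cap V(G)|\le |D'|$, because $G$ consists of connected components of $G'_k$; and (iii) if $\ds(G)\le k$ then $\ds(G'_k)=\ds(G)+1\le k+1\le\beta(N_k)\le\beta(|V(G'_k)|)$, so $G'_k$ is a valid instance of $\mds_\beta$.

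The \fpt-approximation for general dominating set now runs as follows on input $G$: for $k=1,2,3,\dots$, build $G'_k$, run $\mathbb{B}$ on it to get a set $D'$, and if $D'\cap V(G)$ is a dominating set of $G$ of size at most $2ck$, output it and stop. When $k$ reaches $\ds(G)$, property (iii) makes $G'_k$ a legal input, so $\mathbb{B}$ returns a dominating set of $G'_k$ of size at most $c\cdot\ds(G'_k)=c(\ds(G)+1)\le 2c\,\ds(G)=2ck$, and by (ii) its restriction to $V(G)$ passes the test; hence the loop stops at some $k^\ast\le\ds(G)$ and outputs a dominating set of $G$ of size at most $2ck^\ast\le 2c\,\ds(G)$. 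The cost of iteration $k$ is polynomial in $|V(G'_k)|+|E(G'_k)|=\mathrm{poly}(n)+O(N_k)$, hence at most $\mathrm{poly}(n)\cdot g(k)$ for some computable $g$; summing over $k\le\ds(G)$ yields total running time $f(\ds(G))\cdot\mathrm{poly}(|G|)$. This is a constant-ratio \fpt-approximation of the dominating set problem, so by Theorem~\ref{thm:main1} we conclude $\FPT=\W 1$, which is the contrapositive of the corollary.

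The one delicate ingredient is the padding in the previous paragraph: it must inflate $|V|$ enough to swallow the promise $\ds\le\beta(|V|)$ --- and since $\beta$ may increase arbitrarily slowly, $N_k$ can be an astronomically large but still computable function of $k$, which is harmless for an \fpt-bound --- while perturbing the domination number only by an additive $1$ and without introducing any vertex that would shortcut domination and collapse $\ds$; a lone large star disjoint from $G$ accomplishes exactly this. The remaining points are routine: iterations where the promise fails simply produce outputs that the explicit validity/size check discards, and the running-time accounting is standard.
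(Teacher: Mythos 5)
Your proof is correct, but it takes a genuinely different route from the paper's. The paper runs the promise algorithm $\mathbb A$ directly on the unpadded $G$, then distinguishes three outcomes: if $\mathbb A$ fails to return a dominating set, or returns one of size $>c\cdot\beta(|V|)$, this certifies $\ds(G)>\beta(|V|)$, so an exhaustive $2^{O(|V|)}\le 2^{O(\beta^{-1}(\ds(G)))}$ search is itself an fpt-bound; otherwise the returned set $D_0$ satisfies $|D_0|\le c\cdot\beta(|V|)$, and a short case analysis shows $|D_0|\le c\cdot\ds(G)$, so it is output as is. This gives a $c$-approximation rather than a $2c$-approximation, but requires the (WLOG) assumption that the threshold $c\cdot\beta(n)$ can be tested in time $n^{O(1)}$, which implicitly needs a preliminary replacement of $\beta$ by a smaller, polynomial-time-computable nondecreasing unbounded bound.

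Your padding argument sidesteps that technicality entirely: by appending a disjoint star to inflate $|V|$ past $N_k=\min\{N:\beta(N)\ge k+1\}$, you move the cost of evaluating $\beta$ into the fpt overhead $g(k)$, and you never need $\beta$ itself to be fast. The price is the iteration over $k$ and the slight ratio loss from $c$ to $2c$ (from the additive $+1$ in $\ds(G'_k)=\ds(G)+1$), both harmless for a constant-inapproximability statement. One point worth making explicit, which you do flag, is that for $k<\ds(G)$ the padded instance may violate the promise and $\mathbb B$'s output may be arbitrary; the explicit validity-and-size check makes this safe. Both proofs are sound; yours is more self-contained in how it handles slowly growing $\beta$, while the paper's is shorter and preserves the approximation ratio.
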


\medskip
The proof of Theorem~\ref{thm:main1} is crucially built on a recent result
of the second author~\cite{lin15} which shows that the parameterized
biclique problem is $\W 1$-hard. We exploit the gap created in its hardness
reduction (see Section~\ref{subsec:biclique} for more details). In the known
proofs of the classical inapproximability of the dominating set problem, one
always needs the PCP theorem in order to have such a gap, which makes those
proofs highly non-elementary. More importantly, it can be verified that
reductions based on the PCP theorem produce instances with optimal solutions
of relatively large size, e.g., a graph $G=(V,E)$ with $\ds(G)\ge
|V|^{\Theta(1)}$. This is inevitable, since otherwise we might be able to
solve every \NP-hard problem in subexponential time. As an example, if it is
possible to reduce an \NP-hard problem to the approximation of
$\textsc{Min-Dominating-Set}_\beta$ for $\beta(n)= \log \log \log n$, then
by brute-force searching for a minimum dominating set, we are able to solve
the problem in time $n^{O(\log\log\log n)}$. It implies $\NP\subseteq
\DTIME\left(n^{O(\log\log\log n)}\right)$. Because of this,
Corollary~\ref{cor:main1}, and hence also Theorem~\ref{thm:main1}, is
unlikely provable following the traditional approach.

\medskip
Using a result of Chen et.al.~\cite{chehua04} the lower bound in
Theorem~\ref{thm:main1} can be further sharpened.
\begin{theo}\label{thm:main2}
Assume \ETH\ holds.
Then there is \emph{no} \fpt-algorithm which on every input graph $G$
outputs a dominating set of size at most $\sqrt[4+\varepsilon]{\log
(\ds(G))} \cdot \ds(G)$ for every $0<\varepsilon<1$.
\end{theo}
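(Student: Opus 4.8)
The plan is to combine the constant-factor lower bound of Theorem~\ref{thm:main1} with a standard self-improvement (gap amplification by graph products) argument, and then to feed in a stronger assumption on the running time via the result of Chen et al.~\cite{chehua04}. Recall that \cite{chehua04} shows, under \ETH, that \pclique\ (equivalently, the parameterized clique problem) cannot be solved in time $f(k)\cdot n^{o(k)}$ for any computable $f$. Since the reduction underlying Theorem~\ref{thm:main1} starts from a \W 1-hard problem, one should first check that it can be rooted at \pclique\ in such a way that an instance of \textsc{Clique} with parameter $k$ is transformed, in polynomial time, into a graph $G$ on $N$ vertices with $\ds(G)$ bounded by a function of $k$, and with a gap between the ``yes'' value $g(k)$ and $c\cdot g(k)$; tracking the dependence of $N$ on $k$ through the biclique reduction of \cite{lin15} is the bookkeeping one must do here.

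Next I would amplify the constant gap $c$ into an arbitrarily large constant by the usual product construction: from $G$ form a new graph (e.g.\ the $t$-fold ``OR/tensor''-style product appropriate for dominating set, the analogue of graph powering for independent set/clique) so that a solution of ratio $\rho$ for the product yields a solution of ratio $\rho^{1/t}$, or in additive terms so that the multiplicative gap becomes $c^t$. Because $t$ is a constant, this is still a polynomial-time reduction and it only changes $\ds$ and $N$ by functions of $t$ and $k$. Choosing $t$ growing slowly — roughly $t \sim (\log k)^{1/(4+\varepsilon)}$, so that the ratio $c^t$ we rule out is about $\sqrt[4+\varepsilon]{\log(\ds(G))}$ once $\ds(G)$ is expressed in terms of $k$ — gives an algorithm achieving ratio $\sqrt[4+\varepsilon]{\log(\ds(G))}$ an ability to distinguish the two cases. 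The arithmetic to make the exponent come out to $4+\varepsilon$ (rather than, say, $1+\varepsilon$) is dictated by how $\ds(G)$ and $N$ blow up as functions of $k$ in the base reduction; the ``$4$'' reflects a polynomial loss there, which is why the theorem is stated with $4+\varepsilon$ rather than the cleaner $1+\varepsilon$.

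Finally, I would argue the lower bound quantitatively. Suppose an \fpt-algorithm $\mathbb A$ with running time $h(\ds(G))\cdot N^{O(1)}$ achieves ratio $\sqrt[4+\varepsilon]{\log(\ds(G))}$. Running it on the product graph and using the gap, we decide \textsc{Clique} on the original $k$-parameter instance; unwinding the reductions, the total running time becomes $f(k)\cdot n^{o(k)}$ (the $o(k)$ coming precisely from the slow growth of $t$ and the controlled blow-up of $N$ in $k$), contradicting \cite{chehua04} under \ETH. The main obstacle I anticipate is the quantitative bookkeeping: one must verify that (i) the biclique-based reduction of \cite{lin15} yields $\ds(G)$ as an explicit, moderately growing function of the clique parameter $k$ and $N$ polynomial (or quasi-polynomial) in the original size, and (ii) the product/powering step for dominating set genuinely multiplies the gap while keeping $\ds$ controlled — the dominating-set product is less transparent than the clique/independent-set one, so the correct construction and its gap analysis is where the real work lies. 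Everything else is choosing $t$ and matching exponents.
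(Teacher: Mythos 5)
Your high-level outline is correct — amplify the gap, bound the blow-up, and feed the result into the Chen et al.\ lower bound under \ETH\ — and your intuition that the $4$ in $4+\varepsilon$ comes from a polynomial loss in the bookkeeping is essentially right. But there is a genuine gap in the central step, and it is one the paper itself flags. You propose to amplify the constant gap of Theorem~\ref{thm:main1} by a ``usual product construction'' so that a $t$-fold product has multiplicative gap $c^t$. For clique or independent set this is standard (graph powering multiplies the gap exponentially in the power), but for dominating set no such product with that behavior is known; the paper explicitly states that ``dominating sets for general graph products are notoriously hard to understand'' and for this reason develops a tailor-made construction. That construction — Theorem~\ref{thm:mainreduction} — produces a gap only \emph{linear} in the product dimension $c$ (yes-instances have $\ds< 1.1\,d^c$, no-instances have $\ds> c\,d^c/3$), not exponential. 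So the amplification you are invoking is the missing ingredient, not a routine step, and your own caveat that ``the correct construction and its gap analysis is where the real work lies'' is precisely an acknowledgment of that.

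This changes the quantitative setup substantially. Because the paper only obtains gap $\Theta(c)$, it cannot afford $t\sim(\log k)^{1/(4+\varepsilon)}$; instead it chooses $c=\ceil{k^{1-\varepsilon/5}}$, i.e.\ \emph{polynomial} in $k$. With $d=(30 c^2 (k+1)^2)^{4k^3+3c}$ one gets $\log(1.1\,d^c)=O(c\,k^3\log k)$, so $\sqrt[4+\varepsilon]{\log(\ds(G_c))}=o(k^{4/(4+\varepsilon)})=o(c)$, and the claimed approximation algorithm distinguishes the cases. The running time of the reduction is $f(k,c)\cdot|G|^{O(c)}=h(k)\cdot|G|^{o(k)}$, which is where \ETH\ and \cite[Theorem~4.4]{chehua04} enter. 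Note also that the proof does not route through Theorem~\ref{thm:main1} as a black box; it uses Theorem~\ref{thm:mainreduction} directly, with $c$ now growing with $k$. If the exponential amplification you assume were available, one could of course rule out far larger ratios than $\sqrt[4+\varepsilon]{\log(\ds(G))}$, which is itself a hint that the assumption is too strong.
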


\subsection*{Related work}
The existing literature on the dominating set problem is vast. The most
relevant to our work is the classical approximation upper and lower bounds
as explained in the beginning. But as far as the parameterized setting is
concerned, what was known is rather limited.

Downey et. al proved that there is no \emph{additive} approximation of the
the parameterized dominating set problem~\cite{dowfel08}. In the same paper,
they also showed that the independent dominating set problem has no \fpt\
approximation with any approximation ratio. Recall that an independent
dominating set is a dominating set which is an independent set at the same
time. With this additional requirement, the problem is no longer
\emph{monotone}, i.e., a superset of a solution is not necessarily a
solution. Thus it is unclear how to reduce the independent dominating set
problem to the dominating set problem by an approximation-preserving
reduction.

In~\cite{chihajkor13, hajkhakor13} it is proved under \ETH\ that
there is no $c\sqrt{\log \ds(G)}$-approximation algorithm for the dominating
set problem\footnote{The papers actually address the set cover problem,
which is equivalent to the dominating set problem as mentioned in the
beginning.} with running time
$2^{O(\ds(G)^{(\log \ds(G))^d})}|G|^{O(1)}$, where $c$ and $d$ are some
appropriate constants. With the additional \emph{Projection Game Conjecture}
due to~\cite{mos12} and some of its further strengthening, the authors of
\cite{chihajkor13, hajkhakor13} are able to even rule out
$\ds(G)^c$-approximation algorithms with running time almost doubly
exponential in terms of $\ds(G)$.
Clearly, these lower bounds are against far better approximation ratio than
those of Theorem~\ref{thm:main1} and Theorem~\ref{thm:main2}, while the drawback is that the dependence of the running time on $\ds(G)$ is not an arbitrary computable function.

The dominating set problem can be understood as a special case of the
weighted satisfiability problem of \CNF-formulas, in which all literals are
positive. The weighted satisfiability problems for various fragments of
propositional logic formulas, or more generally circuits, play very
important roles in parameterized complexity. In particular, they are
complete for the W-classes. In~\cite{chegro07} it is shown that they have no
\fpt\ approximation of any possible ratio, again by using the
non-monotoncity of the problems. Marx strengthened this result significantly
in~\cite{mar13} by proving that the weighted satisfiability problem is not
\fpt\ approximable for circuits of depth 4 without negation gates, unless
$\FPT = \W 2$. Our result can be viewed as an attempt to improve Marx's
result to depth-2 circuits, although at the moment we are only able to rule
out \fpt\ approximations with constant ratio.

\subsection*{Organization of the paper} We fix our notations in
Section~\ref{sec:pre}. In the same section we also explain the result
in~\cite{lin15} key to our proof. To help readability, we first prove that
the dominating set problem is not \fpt\ approximable with ratio smaller than
$3/2$ in Section~\ref{sec:32}. In the case of the clique problem, once we
have inapproximability for a particular constant ratio, it can be easily
improved to any constant by gap-amplification via graph products. But
dominating sets for general graph products are notoriously
hard to understand (see e.g.~\cite{klazma96}). So to prove
Theorem~\ref{thm:main1}, Section~\ref{sec:main} presents a modified
reduction which contains a tailor-made graph product.
Section~\ref{sec:consq} discusses some consequences of our results. We
conclude in~Section~\ref{sec:con}.

\section{Preliminaries}\label{sec:pre}
We assume familiarity with basic combinatorial optimizations and
parameterized complexity, so we only introduce those notions and notations
central to our purpose. The reader is referred to the standard textbooks
(e.g., ~\cite{auscre99} and \cite{dowfel99,flugro06}) for further
background.

\medskip
$\mathbb N$ and $\mathbb N^+$ denote the sets of natural numbers (that is,
nonnegative integers) and positive integers, respectively. For every $n\in
\mathbb N$ we let $[n]:= \{1, \ldots, n\}$. $\mathbb R$ is the set of real
numbers, and $\mathbb R_{\ge 1}:= \big\{r\in \mathbb R\bigmid r\ge 1\big\}$.
For a function $f: A\to B$ we can extend it to sets and vectors by defining
$f(S):= \{f(x)\mid x\in S\}$ and $f(\v):=\big(f(v_1), f(v_2), \cdots,
f(v_k)\big)$, where $S\subseteq A$ and $\v = (v_1, v_2, \cdots, v_k)\in A^k$
for some $k\in \mathbb{N}^+$.

\medskip
Graphs $G= (V,E)$ are always simple, i.e., undirected and without loops and
multiple edges. Here, $V$ is the vertex set and $E$ the edge set,
respectively. The \emph{size} of $G$ is $|G|:= |V|+|E|$. A subset
$D\subseteq V$ is a \emph{dominating set} of $G$, if for every $v\in V$
either $v\in D$ or there exists a $u\in D$ with $\{u,v\}\in E$. In the
second case, we might say that $v$ is dominated by $u$, and this can be
easily generalized to $v$ dominated by a set of vertices. The
\emph{domination number} $\ds(G)$ of $G$ is the size of a smallest
dominating set. The classical \emph{minimum dominating set problem} is to
find such a dominating set:
\noptprob[5.5]{\mds}{A graph $G=(V,E)$}{A dominating set $D$ of
$G$}{$|D|$}{$\min$}
The decision version of $\mds$ has an additional input $k\in \mathbb N$.
Thereby, we ask for a dominating set of size at most $k$ instead of
$\ds(G)$. But it is well known that two versions can be reduced to each
other in polynomial time. In parameterized complexity, we view the input $k$
as the parameter and thus obtain the standard parameterization of
$\textsc{Min-Dominating-Set}$:
\npprob[8.7]{\pds}{A graph $G$ and $k\in \mathbb N$}{$k$}{Decide whether $G$
has a dominating set of size at most $k$.}
As mentioned in the Introduction, \pds\ is complete for the parameterized
complexity class $\W 2$, the second level of the W-hierarchy. We will need
another important parameterized problem, the \emph{parameterized clique
problem}
\npprob[8.7]{\pclique}{A graph $G$ and $k\in \mathbb N$}{$k$}{Decide whether
$G$ has a clique of size at most $k$.}
which is complete for $\W 1$. Recall that a subset $S\subseteq V$ is a
\emph{clique} in $G= (V,E)$, if for every $u,v \in S$ we have either $u=v$
or $\{u,v\}\in E$.

Those W-classes are defined by weighted satisfiability problems for
propositional formulas and circuits. As they will be used only in
Section~\ref{sec:consq}, we postpone their definition until then.

\subsection*{Parameterized approximability} We follow the general framework
of~\cite{chegro07}. However,  to lessen the notational burden we restrict
our attention to the approximation of the dominating set problem.

\begin{defn}\label{def:paraapp}
Let $\rho:\mathbb N\to \mathbb R_{\ge 1}$. An algorithm $\mathbb A$ is a
\emph{parameterized approximation algorithm} for \pds\ with approximation
ratio $\rho$ if for every graph $G$ and $k\in \mathbb N$ with $\ds(G)\le k$
the algorithm $\mathbb A$ computes a dominating set $D$ of $G$ such that
\[
|D|\le \rho(k)\cdot k.
\]
If the running time of $\mathbb A$ is bounded by $f(k)\cdot |G|^{O(1)}$
where $f:\mathbb N\to \mathbb N$ is computable, then $\mathbb A$ is an \fpt\
approximation algorithm.
\end{defn}

One might also define parameterized approximation directly for \mds\ by
taking $\ds(G)$ as the parameter. The next result shows that essentially
this leads to the same notion.

\begin{prop}[{\cite[Proposition~5]{chegro07}}]
Let $\rho:\mathbb N\to \mathbb R_{\ge 1}$ be a function such that
$\rho(k)\cdot k$ is nondecreasing. Then the following are equivalent.
\begin{enumerate}
\item \pds\ has an \fpt\ approximation algorithm with approximation ratio
    $\rho$.

\item There exists a computable function $g:\mathbb N\to \mathbb N$ and an
    algorithm $A$ that on every graph $G$ computes a dominating set $D$ of $G$ with $|D|\le
    \rho(\ds(G))\cdot \ds(G)$ in time $g(\ds(G))\cdot |G|^{O(1)}$.
\end{enumerate}
\end{prop}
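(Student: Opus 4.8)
The plan is to prove the two implications separately. The hypothesis that $\rho(k)\cdot k$ is nondecreasing is exactly what lets one pass between a size bound phrased in terms of the parameter $k$ and one phrased in terms of $\ds(G)$, and it will be used in both directions.

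For $(2)\Rightarrow(1)$, suppose $g$ and the algorithm $A$ from $(2)$ are given, with $A$ running in time $g(\ds(G))\cdot|G|^{c_0}$ for a suitable constant $c_0$; first replace $g$ by the (still computable) nondecreasing function $n\mapsto\max_{i\le n}g(i)$. I would then define $\mathbb A$ on input $(G,k)$ to simulate $A$ on $G$ for at most $g(k)\cdot|G|^{c_0}$ steps: if the simulation halts within that budget with a dominating set $D$, output $D$, and otherwise output the trivial dominating set $V$. On any instance with $\ds(G)\le k$ the simulation finishes (since $g(\ds(G))\cdot|G|^{c_0}\le g(k)\cdot|G|^{c_0}$) and returns a dominating set $D$ with $|D|\le\rho(\ds(G))\cdot\ds(G)\le\rho(k)\cdot k$, where the last inequality uses $\ds(G)\le k$ together with monotonicity of $\rho(k)\cdot k$; on every other instance $\mathbb A$ still outputs a legitimate (possibly large) dominating set, which is all the definition asks for. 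The running time is $g(k)\cdot|G|^{c_0}+\textup{poly}(|G|)$, so $\mathbb A$ is an \fpt\ approximation algorithm with ratio $\rho$.

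For $(1)\Rightarrow(2)$, let $\mathbb A$ be an \fpt\ approximation algorithm with ratio $\rho$ and running time $f(k)\cdot|G|^{c_0}$, with $f$ nondecreasing and computable. On input $G$ (nonempty; the empty graph is trivial), the algorithm $A$ keeps a current best dominating set $D^\ast$, initialised to $V$, and runs through $k=1,2,3,\dots$: for each $k$ it runs $\mathbb A(G,k)$ and, if the output is a dominating set of $G$ smaller than $D^\ast$, it updates $D^\ast$; it stops the first time $k\ge|D^\ast|$ and returns $D^\ast$. The correctness argument is: since $D^\ast$ is always a genuine dominating set, $|D^\ast|\ge\ds(G)$ holds throughout, so the stopping condition $k\ge|D^\ast|$ cannot trigger for $k<\ds(G)$; hence $A$ actually reaches the iteration $k=\ds(G)$, where $\ds(G)\le k$ forces $\mathbb A(G,\ds(G))$ to return a dominating set of size at most $\rho(\ds(G))\cdot\ds(G)$, so from that point on $|D^\ast|\le\rho(\ds(G))\cdot\ds(G)$. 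Consequently $A$ halts no later than $k=\ceil{\rho(\ds(G))\cdot\ds(G)}$ and returns a dominating set whose size lies between $\ds(G)$ and $\rho(\ds(G))\cdot\ds(G)$, as required. The running time is $\sum_{k=1}^{k_{\textup{stop}}}\bigl(f(k)\cdot|G|^{c_0}+\textup{poly}(|G|)\bigr)$ with $k_{\textup{stop}}\le\ceil{\rho(\ds(G))\cdot\ds(G)}$, which is bounded by $g(\ds(G))\cdot|G|^{O(1)}$ for $g(n):=\ceil{\rho(n)\cdot n}\cdot\bigl(f(\ceil{\rho(n)\cdot n})+1\bigr)$; this $g$ is computable once one has $\rho$ (or a computable upper bound for it) in hand, which holds for every concrete ratio and in particular for the constant ratios relevant here.

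The only genuinely delicate point is the stopping rule in $(1)\Rightarrow(2)$: one is not allowed to iterate $k$ past a function of $\ds(G)$ — otherwise $f(k)$ blows up and the running time is no longer \fpt\ — yet one must be sure to reach $k=\ds(G)$, and the certification that we have gone "far enough" may only use polynomial-time-checkable data, namely the fact that we currently hold a dominating set of size at most $k$, in combination with the monotonicity of $\rho(k)\cdot k$. Once this balance is struck, everything else is routine bookkeeping: monotonising $f$ and $g$, clipping running times, and the trivial dominating set $V$ as a safe default.
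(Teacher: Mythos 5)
The paper cites this proposition from~\cite{chegro07} without reproducing a proof, so there is no in-text argument to compare against; your proof is correct and follows the standard route. In $(2)\Rightarrow(1)$ the clocked simulation with the fallback output $V$ works once $g$ is monotonised, and Definition~\ref{def:paraapp} indeed imposes no constraint on instances with $\ds(G)>k$, so the fallback is harmless. In $(1)\Rightarrow(2)$ the stopping rule $k\ge|D^\ast|$ is exactly the right polynomial-time-checkable certificate: the invariant $|D^\ast|\ge\ds(G)$ prevents premature termination, while once $k$ reaches $\ds(G)$ the bound $|D^\ast|\le\rho(\ds(G))\cdot\ds(G)$ forces termination by $k\le\lfloor\rho(\ds(G))\cdot\ds(G)\rfloor$; guarding the update by checking that $\mathbb A$'s output is actually a dominating set correctly handles the unspecified behaviour on sub-threshold $k$. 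The one hidden hypothesis you rightly flag is computability of $\rho$ (or at least of $n\mapsto\ceil{\rho(n)\cdot n}$), needed so that your $g$ is computable; the proposition as stated in the paper omits this, but it is part of the Chen--Grohe--Gr\"uber framework and trivially satisfied for the constant ratios used here, so this is a cosmetic rather than substantive gap.
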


\subsection*{The Color-Coding}
\begin{lem}[\cite{aloyus95}]\label{lem:cc}
For every $n,k\in \mathbb N$ there is a family $\Lambda_{n,k}$ of polynomial
time computable functions from $[n]$ to $[k]$ such that for every
$k$-element subset $X$ of $[n]$, there is an $h\in \Lambda_{n,k}$ such that
$h$ is injective on $X$. Moreover, $\Lambda_{n,k}$ can be computed in time
$2^{O(k)}\cdot n^{O(1)}$.
\end{lem}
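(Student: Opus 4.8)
The plan is to reformulate the statement as the construction of a small \emph{perfect hash family} and then proceed in two stages: a clean probabilistic argument that a family of the right size exists, followed by a derandomization turning this into an explicit construction within the stated time bound.

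First I would record what is being asked: a family $\Lambda_{n,k}$ of functions $[n]\to[k]$ such that every $k$-element $X\subseteq[n]$ is mapped \emph{injectively} by some member. The natural first move is the probabilistic method. Pick $t$ functions $h_1,\dots,h_t\colon[n]\to[k]$ independently and uniformly at random. For a fixed $k$-set $X$, the probability that a single $h_i$ is injective on $X$ is $k!/k^k\ge e^{-k}$, so the probability that no $h_i$ is injective on $X$ is at most $(1-e^{-k})^t$. A union bound over the $\binom{n}{k}\le n^k$ choices of $X$ shows that if $t=\lceil k\,e^{k}\ln n\rceil+1$ then with positive probability the resulting family works for every $X$ simultaneously. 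Hence a family of size $2^{O(k)}\cdot\log n$ with the required separation property exists, and each member is trivially polynomial-time computable (it is just a lookup table). This already yields a Monte Carlo version of the lemma.

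The remaining and more delicate point is to make this explicit within deterministic time $2^{O(k)}\cdot n^{O(1)}$. I would do this by a two-step \emph{universe reduction}. Step (a): construct a polynomial-size family $\mathcal F$ of functions $[n]\to[m]$ with $m=O(k^2)$ such that every $k$-set is injective under some member; concretely, fix a prime $p\in[n,2n]$ (Bertrand's postulate) and take $\mathcal F=\{\,x\mapsto\big((ax\bmod p)\bmod m\big):a\in[p-1]\,\}$, so that for any fixed $X$ the expected number of colliding pairs over a uniform $a$ is $\binom{k}{2}\cdot O(1/m)<1/2$ once $m$ is a large enough multiple of $k^2$, forcing some $a$ to be collision-free. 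Step (b): construct a perfect hash family $\mathcal G$ of functions $[m]\to[k]$ with $|\mathcal G|=2^{O(k)}$; then $\Lambda_{n,k}:=\{\,g\circ f:f\in\mathcal F,\ g\in\mathcal G\,\}$ has size $O(n)\cdot 2^{O(k)}$ and separates every $k$-set, since an $f$ injective on $X$ sends $X$ to a $k$-subset $f(X)$ of $[m]$, which some $g$ then separates. Composing the construction times gives the bound in the lemma.

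The main obstacle is step (b): producing the perfect hash family on the reduced universe $[m]$ explicitly and fast enough. A direct derandomization by the method of conditional expectations would have to track the $\binom{m}{k}=2^{\Theta(k\log k)}$ relevant $k$-subsets of $[m]$, which is too slow for the $2^{O(k)}$ bound; and one cannot shrink $m$ all the way down to $O(k)$ (where $\binom{m}{k}=2^{O(k)}$ would be affordable) because a family separating every $k$-subset of $[n]$ inside a universe of size $O(k)$ must itself have size $2^{\Omega(k)}$, so step (a) could no longer be polynomial. Resolving this tension is exactly the content of the classical explicit perfect-hash-family and splitter constructions (Fredman--Koml\'os--Szemer\'edi, Schmidt--Siegel, Naor--Schulman--Srinivasan), which build $\mathcal G$ recursively via balanced $2$-colourings of $[m]$ rather than by brute force; invoking one of these completes the proof, and this is the step I expect to consume most of the effort in a self-contained write-up.
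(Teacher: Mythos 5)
The paper offers no proof of this lemma at all --- it is quoted verbatim as a citation to Alon--Yuster--Zwick --- so there is no in-paper argument to compare against; what you have reconstructed is, in structure, exactly the standard proof underlying that citation, and it is correct as far as it goes. The probabilistic calculation ($k!/k^k\ge e^{-k}$, union bound over $\binom{n}{k}\le n^k$ sets, family size $2^{O(k)}\log n$) is right, the FKS-style universe reduction to $m=O(k^2)$ with a polynomial-size family is right, and the composition argument ($f$ injective on $X$ implies $f(X)$ is a $k$-set of $[m]$, then some $g$ separates it) is right. The one place where your write-up is not self-contained is precisely where you say it is not: the explicit $2^{O(k)}$-size perfect hash family on the reduced universe $[m]$, which you defer to Schmidt--Siegel or to the splitter machinery of Naor--Schulman--Srinivasan. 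Your diagnosis of why this step is genuinely hard (conditional expectations over $\binom{m}{k}=2^{\Theta(k\log k)}$ sets is too slow, and shrinking $m$ to $O(k)$ in step (a) is impossible with a polynomial-size family by the Fredman--Koml\'os lower bound) is accurate, and since the paper itself treats the entire lemma as a black box, invoking these classical constructions is a reasonable stopping point rather than a gap. One remark for context: the paper's Appendix A does build a closely related object by a different route --- $(n,k)$-universal Boolean sets in the sense of Jukna, converted into function families --- but that construction is used for the stronger ``prescribed coloring'' property needed in Theorem~\ref{thm:colgap}, and it only applies when $n$ is large relative to $2^{O(k\log k)}$; it is not offered as a proof of Lemma~\ref{lem:cc} itself.
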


\subsection{The \W 1-hardness reduction of the parameterized biclique
problem}\label{subsec:biclique} Our starting point is the following theorem
proved in~\cite{lin15} which states that, on input a bipartite graph, it is
$\Wone$-hard to distinguish whether there exist $k$ vertices with large
number of common neighbors or every $k$-vertex set has small number of
common neighbors.

\begin{theo}[{\cite[Theorem~1.3]{lin15}}]\label{thm:gapreduction}
There is a polynomial time algorithm $\mathbb A$ such that for every graph
$G$ with $n$ vertices and $k\in \mathbb N$ with $\ceil{n^{\frac{6}{k+6}}}>
(k+6)!$ and $6\mid k+1$ the algorithm $\mathbb A$ constructs a bipartite
graph $H=(A \dotcup B, E)$ satisfying:
\begin{enumerate}
\item if $G$ contains a clique of size $k$, i.e., $K_k\subseteq G$, then
    there are $s$ vertices in $A$ with at least $\ceil
    {n^{\frac{6}{k+1}}}$ common neighbors in $B$;

\item otherwise $K_k\nsubseteq G$, every $s$ vertices in $A$ have at most
    $(k+1)!$ common neighbors in $B$,
\end{enumerate}
where $s=\binom{k}{2}$.
\end{theo}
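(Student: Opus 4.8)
The plan is to reduce from $\pclique$, exploiting that a $k$-clique in $G$ is exactly a choice of $\binom{k}{2}=s$ mutually consistent edges of $G$, one for each pair $\{p,q\}\subseteq[k]$: the edge picked for $\{p,q\}$ records that the vertices in roles $p$ and $q$ are adjacent, and the edges picked for $\{p,q\}$ and $\{p,q'\}$ must assign the same vertex of $G$ to role $p$. Accordingly I would give the $A$-side of $H$ one vertex $a^{(u,v)}_{p,q}$ for every ordered pair $(u,v)$ with $\{u,v\}\in E(G)$ and every $1\le p<q\le k$, meaning ``role $p$ is $u$ and role $q$ is $v$''. The \emph{honest} family of a $k$-clique $\phi\colon[k]\to V(G)$ is $S_\phi=\bigl\{a^{(\phi(p),\phi(q))}_{p,q}\bigmid p<q\bigr\}$, a set of exactly $s$ vertices. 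Everything then hinges on constructing the $B$-side and the edges of $H$ so that (i) each honest family has at least $\ceil{n^{6/(k+1)}}$ common neighbours in $B$, and (ii) every other $s$-subset of $A$ — in particular every family that is internally inconsistent, assigns two vertices to one role, uses a non-edge, or misses a slot — has at most $(k+1)!$ common neighbours. Since $K_k\subseteq G$ iff an honest family exists, this is precisely the claimed dichotomy.

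The heart of the construction is an algebraic ``amplification-and-consistency'' gadget that produces $B$. A first, naive attempt takes $B=Z_1\times\cdots\times Z_k$, where each $Z_i$ carries a bipartite graph $\Gamma_i$ between $V(G)$ and $Z_i$ of large minimum degree and small pairwise common neighbourhood (coming from a Reed--Solomon--type graph: identify $V(G)$ with low-degree polynomials over a field $\mathbb F_q$ and $Z_i$ with point--value pairs, so two distinct vertices share at most $\deg$ many neighbours), and makes $a^{(u,v)}_{p,q}$ adjacent to $(z_1,\dots,z_k)$ iff $z_p\in N_{\Gamma_p}(u)$ and $z_q\in N_{\Gamma_q}(v)$. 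This already gives (i): since every role occurs in some pair, the honest family $S_\phi$ has $\prod_i\lvert N_{\Gamma_i}(\phi(i))\rvert\ge q^{k}$ common neighbours, and $q$ is chosen — using $6\mid k+1$ to keep the exponents integral — so that $q^{k}\ge\ceil{n^{6/(k+1)}}$ while $\lvert B\rvert$ stays polynomial in $n$. The difficulty is soundness: an inconsistency confined to coordinate $p$ caps only the $p$-th factor and leaves the other $k-1$ coordinates free, so the common neighbourhood remains huge. The real gadget must therefore be an error-correcting, \emph{globalising} version: $B$ is indexed not by a free product but by the evaluation points of a single high-rate code encoding the whole $k$-tuple of roles, so that disagreement in even one role forces two codewords apart in almost all positions, while an $A$-vertex is adjacent to almost all positions of codewords consistent with its slot constraint and to few positions of the others. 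I would realise this with low-degree multivariate polynomials over $\mathbb F_q$: a role-assignment $\bar w\in V(G)^{k}$ becomes a polynomial $P_{\bar w}$, $B$ is a subset of $\mathbb F_q^{r}$, adjacency of $a^{(u,v)}_{p,q}$ to $x\in B$ is a bounded-degree predicate that $P_{\bar w}(x)$ must satisfy for honest $\bar w$, and Schwartz--Zippel controls the overlap of two inconsistent configurations, yielding the $(k+1)!$ bound after accounting for the $s$ slots and the degrees.

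To finish I would verify the two quantitative claims. For completeness: count the $x\in B$ satisfying the honest predicate simultaneously over all pairs $p<q$; since every role appears, this is governed by the full codeword of $\phi$, whose agreement locus has size at least $\ceil{n^{6/(k+1)}}$ by the rate/degree choice. For soundness: take any $s$-subset $T\subseteq A$; if $T$ is honest it would exhibit a genuine $k$-clique, contradicting $K_k\nsubseteq G$, so $T$ is dishonest, and a short case analysis — two vertices sharing a slot, two vertices disagreeing on a role, a vertex on a non-edge, a missing slot — shows that in each case the set of simultaneously satisfying $x\in B$ lies in the common agreement set of two distinct bounded-degree polynomials (or an impossible predicate), hence has size at most $(k+1)!$, where the hypothesis $\ceil{n^{6/(k+6)}}>(k+6)!$ is exactly what makes $q$ large enough for Schwartz--Zippel to bite and keeps the relevant degrees below $(k+1)!$.

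The main obstacle is the design of the $B$-side gadget itself: it must be \emph{polynomial-sized} in $n$ (ruling out any construction of size $n^{\Omega(k)}$), must \emph{amplify} honest families up to $\ceil{n^{6/(k+1)}}$ common neighbours, and must be \emph{sound} in the strong sense that a single local inconsistency collapses the common neighbourhood to $\mathrm{poly}(k)$ — three demands that pull against one another and that a plain product gadget provably cannot meet. Producing an explicit, polynomial-time-computable algebraic object (essentially a bipartite analogue of a high-rate, locally testable code) that meets all three at once, with parameters hitting the exact thresholds $n^{6/(k+1)}$ and $(k+1)!$ under the stated divisibility and size hypotheses, is where the real work lies.
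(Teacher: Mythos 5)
The theorem you are asked to prove is not proved in this paper; it is quoted verbatim as Theorem~1.3 of \cite{lin15} and used as a black box, so the comparison must be against Lin's original argument. Your reduce-from-Clique framing, the role of $s=\binom{k}{2}$, and your observation that a free product $B=Z_1\times\cdots\times Z_k$ cannot be sound (one bad coordinate kills only one factor) are all correct, and you rightly single out the $B$-side gadget as the crux. But the patch you sketch is not internally consistent. You want $B$ to be the evaluation domain of a single low-degree polynomial $P_{\bar w}$ encoding the whole assignment $\bar w\in V(G)^k$, and the adjacency of $a^{(u,v)}_{p,q}$ to $x\in B$ to be ``a bounded-degree predicate that $P_{\bar w}(x)$ must satisfy.'' That is not a well-formed edge rule: adjacency in a bipartite graph must be a function of $(u,v,p,q,x)$ alone, but $P_{\bar w}$ depends on all $k$ coordinates of $\bar w$, which are not determined by $(u,v,p,q)$. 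Turning a two-coordinate local view into a reliable test of a global codeword is exactly the low-degree-testing/composition content of the PCP machinery; if that were available, the result would not be interesting as a PCP-free argument.

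Lin's proof takes a genuinely different and more elementary route that never passes through 2-CSP consistency at all: it is \emph{role-oblivious}. The $A$-side is built from the edges of $G$ with no slot labels, and $B$ comes from a bipartite threshold gadget $T$ over $V(G)$, with an edge $\{u,v\}$ joined to $z\in B$ exactly when both $u$ and $v$ are $T$-adjacent to $z$, so the common neighbourhood of a set $S\subseteq A$ of edges is $\bigcap_{w\in V(S)}N_T(w)$, an intersection over the endpoint set $V(S)$. The whole gap then rides on a one-line extremal fact: $\binom{k}{2}$ edges span at most $k$ vertices if and only if they form a $K_k$, and otherwise span at least $k+1$. Consequently all that is required of $T$ is a sharp ``$k$ versus $k{+}1$'' threshold on common-neighbourhood sizes (at least $\lceil n^{6/(k+1)}\rceil$ for $k$ left vertices, at most $(k+1)!$ for any $k+1$), and the algebraic-geometric work in \cite{lin15} --- the part that produces the stated thresholds and the $6\mid k+1$ condition --- is devoted to constructing such an explicit, polynomial-size, pseudorandom bipartite graph $T$, not a code globalising a CSP. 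Your instinct that the gadget is where the difficulty lives is right; the idea your sketch is missing is that the clique structure is already forced combinatorially by the edge count, so no role labels and no consistency test are needed, which is precisely what lets the argument avoid PCP composition.
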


In our reductions from $\pclique$ to $\pds$, we use the following procedure
to ensure that the instance $(G,k)$ of $\pclique$ satisfies $6\mid k+1$.

\bigskip
\noindent\textbf{Preprocessing.} On input a graph $G$ and $k\in \mathbb
N^+$, if $6$ does not divide $k+1$, let $k'$ be the minimum integer such
that $k'\ge k$ and $6\mid k'+1$. We construct a new graph $G'$ by adding a
clique with $k'-k$ vertices into $G$ and making every vertex of this clique
adjacent to other vertices in $G$. It is easy to see that $k'\le k+5$, and
$G$ contains a $k$-clique if and only if $G'$ contains a $k'$-clique. Then
we proceed with $G\gets G'$ and $k\gets k'$.

\section{The Case $\rho< 3/2$}\label{sec:32}

As the first illustration of how to use the gap created in
Theorem~\ref{thm:gapreduction}, we show in this section that $\pds$ cannot
be fpt approximated within ratio $< 3/2$. This serves as a stepping stone to
the general constant-inapproximability of the problem.

\begin{theo}
Let $\rho< 3/2$. Then there is no \fpt\ approximation of the parameterized
dominating set problem achieving  ratio $\rho$ unless $\FPT=\W 1$.
\end{theo}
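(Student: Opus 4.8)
The plan is to reduce $\pclique$ to the task of $\rho$-approximating $\pds$ for a fixed $\rho<3/2$, using the gap from Theorem~\ref{thm:gapreduction}. Given an instance $(G,k)$ of $\pclique$, first apply the Preprocessing step so that $6\mid k+1$, and then run the algorithm $\mathbb A$ of Theorem~\ref{thm:gapreduction} to obtain a bipartite graph $H=(A\dotcup B,E)$ with the property that a $k$-clique in $G$ yields $s=\binom{k}{2}$ vertices in $A$ with at least $\lceil n^{6/(k+1)}\rceil$ common neighbors in $B$, while if $G$ has no $k$-clique then every $s$ vertices of $A$ have at most $(k+1)!$ common neighbors. (We may also assume $\lceil n^{6/(k+6)}\rceil>(k+6)!$, since otherwise $n$ is bounded by a function of $k$ and we can decide $\pclique$ by brute force.) The idea is to build from $H$ a graph $H'$ whose domination number distinguishes the two cases: roughly $s+1$ in the yes-case versus at least $\lceil\tfrac{3}{2}(s+1)\rceil$ (or more) in the no-case, so that a $\rho$-approximation with $\rho<3/2$, run with parameter $k':=s+1$, must detect which case we are in.

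The construction of $H'$ is the core of the argument. The natural gadget is: take the vertex set of $H$, attach to each vertex $a\in A$ a private pendant neighbor (or a $K_2$ forcing $a$ or its twin into any dominating set), so that a dominating set is essentially forced to pick about $|A|$ vertices unless it picks a vertex dominating many $a$'s — and the only vertices dominating many $a$'s are those in $B$, whose neighborhoods in $A$ are exactly the common-neighbor structure. Dually, one adds a mechanism so that covering $B$ cheaply is possible only when some $s$-subset of $A$ has a large common neighborhood. A clean way to organize this is to create $s+1$ "choice" vertices/blocks: in the yes-case one picks the $s$ clique-witnessing vertices in $A$ together with one common neighbor $b\in B$, and these $s+1$ vertices dominate everything; in the no-case, because no $s$-subset of $A$ has more than $(k+1)!$ common neighbors, no set of size $s+1$ can dominate all of $B$ (provided $|B|$ and the gadget sizes are chosen with $\lceil n^{6/(k+1)}\rceil \gg (k+1)!$, which holds under the hypothesis), and a counting argument shows any dominating set must have size at least $\lceil\tfrac{3}{2}\rceil$ times $s+1$ — or one instead arranges a gap from $s+1$ to $2s+1$ or similar, whichever makes the ratio bound $3/2$ come out cleanly. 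The parameter of the produced $\pds$ instance is $k'=s+1 = \binom{k}{2}+1$, a function of $k$ only, and $|H'|$ is polynomial in $|H|$, hence polynomial in $|G|$; so the reduction is an fpt-reduction.

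With the gadget in hand, the argument concludes as follows. Suppose an fpt approximation algorithm $\mathbb B$ for $\pds$ with ratio $\rho<3/2$ exists. Run $\mathbb B$ on $(H',k')$. If $G$ has a $k$-clique, then $\ds(H')\le k'$, so $\mathbb B$ outputs a dominating set of size at most $\rho(k')\cdot k' < \tfrac{3}{2}k'$. If $G$ has no $k$-clique, then $\ds(H')\ge \tfrac{3}{2}k'$ by the gap, so $\mathbb B$ is forced to output a set of size at least $\tfrac{3}{2}k'$ (or it fails the promise, which it cannot since the promise $\ds(H')\le k'$ is only required in the yes-case — one must state the definition of approximation algorithm carefully here: when $\ds(H')>k'$ the algorithm has no obligation, so instead we compare the output sizes and use that in the yes-case the output is $<\tfrac32 k'$ while in the no-case no dominating set that small exists at all). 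Thus the size of $\mathbb B$'s output separates the cases, giving an fpt algorithm for $\pclique$ and hence $\FPT=\W 1$.

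The main obstacle is the gadget design in the no-case: one must turn the purely combinatorial bound "every $s$ vertices of $A$ have at most $(k+1)!$ common neighbors in $B$" into a genuine lower bound of $\tfrac{3}{2}(s+1)$ on $\ds(H')$, not merely "$>s+1$". Getting a multiplicative factor of $3/2$ rather than just "$+1$" requires a more careful construction than the naive pendant-vertex trick — presumably by introducing several parallel copies of the $B$-side or several independent "demands" that each require their own clique-witness, so that a dominating set of size less than $\tfrac32 k'$ would have to reuse a single $s$-subset of $A$ to satisfy two demands, contradicting the small-common-neighborhood bound. Verifying that the gap is exactly large enough to beat an arbitrary $\rho<3/2$ (and that the chosen $n^{6/(k+1)}$ versus $(k+1)!$ separation is ample for the required gadget multiplicities) is the delicate quantitative point; everything else — the preprocessing, the brute-force escape when $n$ is small, the fpt-reduction bookkeeping — is routine.
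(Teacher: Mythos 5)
Your high-level plan is right: reduce from $\pclique$ via Theorem~\ref{thm:gapreduction}, exploit the gap between ``$s$ vertices of $A$ with $\gg(k+1)!$ common neighbors'' and ``every $s$ vertices of $A$ have $\le(k+1)!$ common neighbors,'' and arrange a multiplicative gap in the domination number. But the proof has a genuine and substantial gap, namely the one you flagged yourself: the actual gadget. Everything you wrote about it is wrong or absent, and it is precisely the content of the theorem.

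Concretely, several things do not hold up. (1) Your proposed output parameter $k'=s+1$ is wrong in kind: in the paper's reduction the yes-case domination number is about $(1+\varepsilon)d$, where $d:=\lceil s/\varepsilon\rceil^{2s}$ is a (computable) function of $k$ and $\varepsilon$ that is far larger than $s+1$; choosing the parameter to be $s+1$ and expecting ``$s$ witnesses in $A$ plus one common neighbor $b\in B$ dominate everything'' cannot work, since one vertex $b\in B$ cannot dominate the rest of $A$, and the $s$ vertices of $A$ only dominate their \emph{common} neighborhood in $B$, not all of $B$. (2) The pendant-vertex gadget you suggest on the $A$-side forces the dominating set to have size $\Omega(|A|)$, which is polynomial in $n$, not bounded by a function of $k$ — so the reduction would no longer be parameter-preserving. (3) The step from ``no $s$-subset of $A$ has more than $(k+1)!$ common neighbors'' to a \emph{multiplicative} lower bound $\ds\ge\tfrac32 k'$ is exactly what is missing; you acknowledge this but then only gesture at ``parallel copies'' without specifying how the counting would go or why it would yield $3/2$ rather than an additive $+1$.

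What the paper actually does, and what you would need to supply: apply color-coding (Lemma~\ref{lem:cc}) so that the yes-case biclique becomes $s$ vertices of $A$ of $s$ distinct $\alpha$-colors with $d$ common neighbors of $d$ distinct $\beta$-colors; cliquify each $\beta$-color class $B_c$ of $B(H)$ and attach two private vertices $x_c,y_c$ to it, forcing any dominating set to pick at least one vertex per color class ($d$ vertices total); add $t\approx\tfrac12 d^{1-1/2s}$ disjoint copies $A(H)\times\{i\}$, each cliquified; and add a verification layer $W=\{w_{b,j,i}\}$ wired so that (E3) a same-color $b'\ne b$ dominates $w_{b,j,i}$ and (E4) a copy $(a,i)$ with $\{a,b\}\in E(H)$ and $\alpha(a)=j$ dominates it. In the yes-case one picks the $d$ biclique $B$-vertices plus the $s\cdot t$ copies of the biclique $A$-vertices, total $<(1+\varepsilon)d$. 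In the no-case, the argument in Claim~4 shows that if fewer than $(3/2-\delta)d$ vertices are used, then many color classes are hit only once, their $w$-vertices must be covered from $C$, and a pigeonhole argument over the at most $\sqrt d$ possible $s$-subsets of the small set $X\subseteq A(H)$ used in a single copy produces an $s$-set with $\ge 2\delta\sqrt d-1>(k+1)!$ common neighbors, contradicting (H2). Without this construction — the $x_c,y_c$ forcing, the $t$-fold copying, the $W$-layer, and the quantitative choice of $d$ and $t$ — the reduction you sketch does not produce any multiplicative gap, and the theorem is not proved.
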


\proof We fix some $\varepsilon, \delta\in \mathbb R$ with $0< \varepsilon<
1$, $0< \delta< 1/2$,  and
\begin{equation}\label{eq:rhoepsdelta}
\frac{3/2- \delta}{1+ \varepsilon} > \rho.
\end{equation}

Let $G$ be a graph with $n$ vertices and $k\in \mathbb N$ a parameter. We
set $s:= \binom{k}{2}$,
\begin{eqnarray*}
d:= \ceil{\frac{s}{\varepsilon}}^{2s},
 & \text{and} &
t:= \ceil{\left(\frac{1}{2}- \delta\right)\cdot d^{1-1/2s}}.
\end{eqnarray*}
As a consequence, when $k$ and $n$ are sufficiently large, we have
\begin{equation}\label{eq:at}
s t< \varepsilon d,
 \quad \left(\frac{1}{2}- \delta\right)\cdot \frac{d}{t}\le \sqrt[2s]{d},
 \quad (k+1)!<  2 \delta \sqrt{d}-1,
 \quad \text{and} \quad d\le \lceil n^{\frac{6}{k+1}}\rceil.
\end{equation}

By Theorem~\ref{thm:gapreduction} (and the preprocessing) we can compute in
\fpt-time a bipartite graph $H_0=(A_0 \dotcup B_0, E_0)$ such that:
\begin{enumerate}
\item[-] if $K_k\subseteq G$, then there are $s$ vertices in $A_0$ with
    $d$ common neighbors in $B_0$;

\item[-] if $K_k\nsubseteq G$, then every $s$ vertices in $A_0$ have at
    most $(k+1)!$ common neighbors in $B_0$.
\end{enumerate}
Then using the color-coding in Lemma~\ref{lem:cc}, again in \fpt-time, we
construct two function families $\Lambda_A:= \Lambda_{|A_0|, s}$ and
$\Lambda_B:= \Lambda_{|B_0|, d}$ such that
\begin{enumerate}
\item[-] for every $s$-element subset $X\subseteq A_0$ there is an $h\in
    \Lambda_A$ with $h(X)= [s]$;

\item[-] for every $d$-element subset $Y\subseteq B_0$ there is an $h\in
    \Lambda_B$ with $h(Y)= [d]$.
\end{enumerate}
Define the bipartite graph $H= \big(A(H) \dotcup B(H), E(H)\big)$
by
\begin{align*}
A(H) &:= A_0\times \Lambda_A\times \Lambda_B,
 \qquad B(H):= B_0\times  \Lambda_A\times \Lambda_B\\
E(H) &:= \Big\{\big\{(u,h_1,h_2), (v, h_1,h_2)\big\}
 \Bigmid \text{$u\in A_0$, $v\in B_0$, $h_1\in \Lambda_A$, $h_2\in \Lambda_B$,
  and $\{u,v\}\in E_0$ }\Big\}.
\end{align*}
Moreover, define two colorings $\alpha: A(H)\to [s]$ and $\beta: B(H)\to
[d]$ by
\begin{eqnarray*}
\alpha(u,h_1,h_2):= h_1(u) & \text{and} & \beta(v,h_1,h_2):= h_2(v).
\end{eqnarray*}
It is straightforward to verify that
\begin{enumerate}
\item[(H1)] if  $K_k\subseteq G$, then there are $s$ vertices of distinct
    $\alpha$-colors in $A(H)$ with $d$ common neighbors of distinct
    $\beta$-colors in $B(H)$;

\item[(H2)] if  $K_k\nsubseteq G$, then every $s$ vertices in $A(H)$ have
    at most $(k+1)!$ common neighbors in $B(H)$.
\end{enumerate}

%

\medskip
Now from $H$, $\alpha$, and $\beta$ we construct a new graph $G'=
\big(V(G'), E(G')\big)$ as
follows. First, its vertex set is defined by
\[
V(G'):= B(H) \dotcup \big\{x_i, y_i \bigmid i\in [d]\big\}
 \dotcup C \dotcup W,
\]
where
\begin{eqnarray*}
C:=  A(H) \times [t] & \text{and} &
W:= \left\{w_{b,j,i} \Bigmid b\in B(H), i\in [t], j\in [s]\right\}.
\end{eqnarray*}
Moreover, $G'$ contains the following types of edges.
\begin{enumerate}
\item[(E1)] $\{b,b'\}\in E(G')$ with $b,b'\in B(H)$, $b\ne b'$, and
    $\beta(b)= \beta(b')$ \big(i.e., all vertices in $B(H)$ with the same
    color under $\beta$ form a clique in $G'$\big).

\item[(E2)] Let $b\in B(H)$ and $c:= \beta(b)$. Then $\{x_c, b\}, \{y_c,
    b\}\in E(G')$.

\item[(E3)] Let $b,b'\in B(H)$ with $\beta(b)= \beta(b')$ and $b\ne b'$.
    Then $\big\{w_{b,j,i}, b'\big\}\in E(G')$ for every $i\in [t]$ and
    $j\in [s]$.

\item[(E4)] $\big\{(a,i), w_{b,j,i}\big\}\in E(G')$ for every $\{a,b\}\in
    E(H)$, $j= \alpha(a)$ and $i\in [t]$.

\item[(E5)] Let $a, a'\in A(H)$ with $a\ne a'$ and $i\in [t]$. Then
    $\big\{(a,i), (a',i)\big\}\in E(G')$.
\end{enumerate}

\medskip
To ease presentation, for every $c\in [d]$ we set
\[
B_c:= \big\{b\in B(H)\bigmid \beta(b)= c\big\} \cup \{x_c, y_c\}.
\]

\medskip
\noindent \textit{Claim 1.} If $D$ is a dominating set of $G'$, then $D\cap
B_c\ne \emptyset$ for every $c\in [d]$.

\medskip
\noindent \textit{Proof of the claim.} We observe that every $x_c$ is only
adjacent to vertices in $B_c$. \hfill$\dashv$

\bigskip \noindent \textit{Claim 2.} If $G$ contains a $k$-clique, then
$\ds(G')< (1+\varepsilon) d$.

\medskip
\noindent \textit{Proof of the claim.} By (H1) the bipartite graph $H$ has a
$K_{s,d}$ biclique $K$ with $\alpha(A(H)\cap K)= [s]$ and $\beta(B(H)\cap
K)=[d]$. It is then easy to verify that
\[
\big(B(H)\cap K\big) \dotcup \big((A(H)\cap K)\times [t]\big)
\]
is a dominating set of $G'$, whose size is $d+ s\cdot t< (1+\varepsilon) d$
by~\eqref{eq:at}. \hfill$\dashv$

\bigskip \noindent \textit{Claim 3.} If $G$ contains no $k$-clique, then
every $s$-vertex set of $A(H)$ has at most $(k+1)!< 2 \delta \sqrt{d}-1$
common neighbors in $B(H)$.

\medskip
\noindent \textit{Claim 4.} If $G$ contains no $k$-clique, then
\[
\ds(G')> \left(\frac{3}{2}- \delta\right)\cdot d.
\]

\medskip
\noindent \textit{Proof of the claim.} Let $D$ be a dominating set of $G'$.
By Claim~1 we have $D\cap B_c\ne \emptyset$ for every $c\in [d]$. Define
\[
e:= \Big|\big\{c\in [d] \bigmid |D\cap B_c|\ge 2\big\}\Big|.
\]
If $e> (1/2- \delta) \cdot d$ then $|D|> d+ e> (3/2- \delta)\cdot d$ and we
are done.

\medskip
So let us consider $e\le (1/2- \delta) \cdot d$ and without loss of
generality $|D\cap B_c| = 1$ for every $c\le (1/2+ \delta) \cdot d$. Fix
such a $c$ and assume $D\cap B_c = \{b_c\}$. Recall $x_c, y_c\in V(G')$ are
not adjacent to any vertex outside $B_c$, and there is no edge between them,
thus $b_c\in B_c\setminus \{x_c, y_c\}= \big\{b\in B(H)\bigmid \alpha(b)=
c\big\}$. Let
\[
W_1:= \left\{w_{b_c, j,i} \Bigmid \text{$i\in [t]$, $j\in [s]$,
 and $c\le (1/2+ \delta) \cdot d$}\right\}\subseteq W.
\]
(E3) implies that every $w_{b_c, j,i}\in W_1$ is not dominated by any vertex
in $D\cap \bigcup_{c\in [d]} B_c$. Therefore, it has to be dominated by or
included in $D\cap (C\cup W)$.

\medskip
If $|D\cap W_1|> (1/2- \delta)\cdot d $, then again we are done. So suppose
$|D\cap W_1|\le (1/2- \delta)\cdot d$. Without loss of generality let
\[
W_2:= \left\{w_{b_c, j,i} \Bigmid \text{$i\in [t]$, $j\in [s]$,
 and $c\le 2\delta d$}\right\}\subseteq W_1
\]
and assume $W_2 \cap D= \emptyset$. Thus $W_2$ has to be dominated by $D\cap
C$. For later purpose, let
\[
Y:= \big\{b_c\bigmid c\le 2\delta d\big\}.
\]
Obviously, $|Y|\ge 2\delta d -1$.

\medskip
Again we only need to consider the case $|D\cap C|\le (1/2- \delta)\cdot d$.
Recall $C= A(H)\times [t]$. Thus there is an $i \in [t]$ such that
\[
\Big|D\cap \big(A(H)\times \{i\}\big) \Big|
 \le \left(\frac{1}{2}- \delta\right)\cdot \frac{d}{t}.
\]
Let $X:= \big\{a\in A(H)\bigmid (a,i)\in D\big\}$, and in particular,
$|X|\le (1/2-\delta)\cdot d/ t$. Since $W_2$ is dominated by $D\cap C$, we
have for all $b\in Y$ and $j\in [s]$ there exists $a\in X$ such that
$\big\{(a, i), w_{b,j,i}\big\}\in E(G')$, which means that $\{a, b\}\in
E(H)$ and $\alpha(a)= j$. It follows that in the graph $H$ every vertex of
$Y$ has at least $s$ neighbors in $X$. Recall that $(1/2- \delta)\cdot d/
t\le \sqrt[2s]{d}$ by~\eqref{eq:at}. There are at most $\sqrt{d}$ different
types of $s$-vertex sets in $X$, i.e.,
\[
\left|\binom{X}{s}\right| \le \binom{(1/2-\delta)\cdot d/ t}{s}
 \le \left(\sqrt[2s]{d}\right)^{s} = \sqrt{d}.
\]
By the pigeonhole principle, there exists an $s$-vertex set of $X\subseteq
A(H)$ having at least $|Y|/ \sqrt{d}\ge  2\delta \sqrt{d}-1$ common
neighbors in $Y\subseteq B(H)$, which contradicts Claim~3. \hfill$\dashv$.

\medskip
Claim~2 and Claim~4 indeed imply that there is an fpt-reduction from the
clique problem to the dominating set problem which creates a gap great than
\[
\frac{3/2 - \delta}{1+\varepsilon}.
\]
So if there is a $\rho$-approximation of the dominating set problem,
by~\eqref{eq:rhoepsdelta} we can decide the clique problem in fpt time.
\proofend

\section{The Constant-Inapproximbility of \pds}\label{sec:main}

Theorem~\ref{thm:main1} is a fairly direct consequence of the following
theorem.
\begin{theo}[Main]\label{thm:mainreduction}
There is an algorithm $\mathbb A$ such that on input a graph $G$, $k\ge 3$,
and $c\in \mathbb N$ the algorithm $\mathbb A$ computes a graph $G_c$ such
that
\begin{enumerate}
\item[(i)] if $K_k\subseteq G$, then $\ds(G_c)< 1.1\cdot d^c$;

\item[(ii)] if $K_k\nsubseteq G$, then $\ds(G_c)> c\cdot d^c/3$,
\end{enumerate}
where $d= \left(30\cdot c^2\cdot (k+1)^2\right)^{4\cdot k^3+ 3c}$. Moreover
the running time of $\mathbb A$ is bounded by $f(k,c)\cdot |G|^{O(c)}$ for a
computable function $f: \mathbb N\times \mathbb N\to \mathbb N$.
%
%
\end{theo}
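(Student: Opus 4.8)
The plan is to lift the gap reduction of Section~\ref{sec:32} to a ``$c$-fold'' version: the completeness cost should stay at roughly $d^c$, while in the soundness case $c$ essentially independent obstructions each contribute another $\Theta(d^c)$ to the domination number, so that the overall gap grows like $c$. I would begin by reusing the preliminaries of Section~\ref{sec:32} verbatim: apply the preprocessing so that $6\mid k+1$, then invoke Theorem~\ref{thm:gapreduction} together with the color-coding of Lemma~\ref{lem:cc} to produce in fpt time a bipartite graph $H$ with colorings $\alpha\colon A(H)\to[s]$ and $\beta\colon B(H)\to[d]$ satisfying the colored dichotomy (H1)/(H2) of Section~\ref{sec:32}, where $d$ now takes the prescribed value $\left(30c^2(k+1)^2\right)^{4k^3+3c}$. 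Since $d$ depends only on $k$ and $c$, the hypotheses $\lceil n^{6/(k+6)}\rceil>(k+6)!$ and $d\le\lceil n^{6/(k+1)}\rceil$ of Theorem~\ref{thm:gapreduction} hold once $n$ is large enough (small inputs being decided by brute force), and in particular $d\le n$, which keeps $|G_c|$ of size $|G|^{O(c)}$.

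The heart of the matter is a tailor-made graph product. Whereas Section~\ref{sec:32} has a single family of enforcer pairs $\{x_e,y_e\}$ indexed by a color $e\in[d]$, here I index the ``blocks'' by $c$-tuples of colors, i.e.\ by $[d]^c$; copies of the enforcer gadget force any dominating set to spend at least $d^c$ vertices hitting all $d^c$ blocks, one representative in $B(H)$ per tuple. On top of this I attach, for each coordinate $\ell\in[c]$, its own disjoint family $W^{(\ell)},C^{(\ell)}$ of $W/C$-gadgets in the style of Section~\ref{sec:32}, wired through the edges of $H$ so that $W^{(\ell)}$ can be dominated by only a lower-order number of further vertices exactly when the representatives are ``consistent along coordinate $\ell$'' --- meaning that, on each line in direction $\ell$, the $d$ representatives can be read off from $s$ vertices of $A(H)$ of distinct $\alpha$-colors. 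For completeness~(i), given $K_k\subseteq G$, condition (H1) supplies a single $K_{s,d}$ in $H$; taking all $d^c$ representatives from its $B$-side makes them consistent along \emph{every} coordinate at once, and for each $\ell$ a further lower-order number of vertices from the $A$-side kills $W^{(\ell)}$, for a total of $d^c+c\cdot(\text{lower-order})<1.1\cdot d^c$, since $d$ is astronomically larger than $k$ and $c$.

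For soundness~(ii), take any dominating set $D$ of $G_c$ with $K_k\nsubseteq G$; the enforcers already give $d^c$ vertices hitting all blocks. I then split into cases: if at least $d^c/3$ blocks are hit twice we are immediately done; otherwise, for each coordinate $\ell\in[c]$ separately I would replay the Claim~3/Claim~4 analysis of Section~\ref{sec:32}, concluding that $D$ contains $\Theta(d^c)$ vertices inside the layer-$\ell$ gadgets $W^{(\ell)}\cup C^{(\ell)}$ --- and in the subcase where many of these lie in $C^{(\ell)}$, a pigeonhole over the few possible $s$-subsets of $A(H)$ yields $s$ vertices with more than $(k+1)!$ common neighbours in $B(H)$, contradicting (H2). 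Because the gadgets $W^{(\ell)}\cup C^{(\ell)}$ are pairwise disjoint, these $c$ contributions are charged to disjoint parts of $D$ and hence add up, so
\[
|D|\ \ge\ d^c+c\cdot\Theta(d^c)\ >\ c\cdot d^c/3,
\]
the constants being absorbed into the (generous) choice of $d$.

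The step I expect to be the main obstacle is precisely the design and analysis of this product: one must wire $W^{(\ell)},C^{(\ell)}$ into $H$ so that three requirements hold simultaneously --- (a) in the YES case all $c$ layers become cheap off one and the same biclique; (b) in the NO case each layer independently triggers the pigeonhole/common-neighbour bound of Section~\ref{sec:32}; and (c) the $c$ resulting lower-bound contributions can be attributed to pairwise disjoint portions of the dominating set, so that they genuinely sum rather than overlap. Reconciling (a)--(c) is what forbids an off-the-shelf graph product, and what pins down the precise (large) value of $d$ in the statement.
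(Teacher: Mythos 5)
There is a genuine gap. You lay out the three requirements (a)--(c) that a product construction must satisfy, and then explicitly concede that ``reconciling (a)--(c) is what forbids an off-the-shelf graph product'' --- that is, the hard part of the theorem is identified but not solved. The paper's argument is structurally different from your envisioned ``$c$ disjoint layers $W^{(\ell)}\cup C^{(\ell)}$ each contributing $\Theta(d^c)$.'' There is a single $W$-family, indexed by a vector $\j\in[\Delta s]^c$ together with $i\in[t]$, with the coordinate $\ell$ hidden inside $C=A(H)\times[c]\times[t]$ via the edge rule (E3). The factor-$c$ gap is not obtained by summing $c$ independent Claim~4-style bounds; it comes from a combinatorial lemma (Lemma~\ref{lem:productgap}): if $V\subseteq[t]^c$ admits a function $\theta\colon V\to[c]$ such that for every coordinate $i$ the slice $\{\v(i) : \theta(\v)=i\}$ has at most $t-\Delta$ values, then $|V|\le t^c-\Delta^c$. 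Applied to the set $J_{\i}\subseteq[\Delta s]^c$ of $\j$-indices whose gadgets at block $\i$ are dominated by $D\cap C$, this shows each light block misses at least $\Delta^c$ indices; summing over the $\ge 2d^c/3$ light blocks and comparing with the upper estimate on $|Z|$ produces a contradiction \emph{because} $\Delta^c=2^c>c$. This lemma is the crux of the amplification and is entirely absent from your proposal.

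Two further concrete errors would have to be repaired even if the product lemma were supplied. First, your case split ``if at least $d^c/3$ blocks are hit twice we are done'' only yields $|D|\ge(4/3)d^c$, which falls short of $c\cdot d^c/3$ once $c\ge 4$; the paper's threshold is $|D\cap V_{\i}|\ge c+1$, giving $|D|\ge(c+1)d^c/3>c\cdot d^c/3$. Second, you reuse the Section~\ref{sec:32} colored bipartite graph with $\alpha\colon A(H)\to[s]$ verbatim, but the soundness argument needs the amplified Theorem~\ref{thm:colgap} with $\Delta=2$, namely $\alpha\colon A(H)\to[\Delta s]$ and (H2) bounding common neighbors of $\Delta(s-1)+1$ vertices rather than $s$; this slack is exactly what makes the pigeonhole on $|Y|$ work together with the product lemma. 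Without these ingredients the proposal does not close, and the place where it stops is precisely the place where the actual proof begins.
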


\begin{proof}[of Theorem~\ref{thm:main2}]
Suppose for some $\epsilon> 0$ there is an \fpt-algorithm $\mathbb{A}(G)$
which outputs a dominating set for $G$ of size at most
$\sqrt[4+\varepsilon]{\log (\ds(G))} \cdot \ds(G)$. Of course we can further
assume that $\varepsilon< 1$. Then on input a graph $G$ and $k\in\mathbb N$,
let
\begin{eqnarray*}
c := \ceil{k^{1-\epsilon/5}}= o(k)
 & \text{and} &
d:= \left(30\cdot c^2\cdot (k+1)^2\right)^{4\cdot k^3+3c}.
\end{eqnarray*}
We have
\[
\sqrt[4+\varepsilon]{\log(1.1\cdot d^c)}
= O\left(\sqrt[4+\varepsilon]{c\cdot k^3\cdot \log k}\right)
= o\left(k^{\frac{4}{4+\varepsilon}}\right)
= o(c).
\]
By Theorem~\ref{thm:mainreduction}, we can construct a graph $G_c$ with
properties (i) and (ii) in time
\[
f(k,c)\cdot |G|^{O(c)} = h(k)\cdot |G|^{o(k)}
\]
for an appropriate computable function $h:\mathbb N\to \mathbb N$. Thus, $G$
contains a clique of size $k$ if and only if $\mathbb A(G_c)$ returns a
dominating set of size at most
\[
1.1\cdot d^c\cdot \sqrt[4+\varepsilon]{\log(1.1\cdot d^c)}
 = o\left(c\cdot d^c \right)< \frac{c\cdot d^c}{3},
\]
where the inequality holds for sufficiently large $k$ \big(and hence
sufficiently large $c\cdot d^c$\big).

Therefore we can determine whether $G$ contains a $k$-clique in time
$g(k)\cdot |G|^{o(k)}$ for some computable $g:\mathbb N\to \mathbb N$. This
contradicts a result in Chen et.al.~\cite[Theorem~4.4]{chehua04} under \ETH.
\proofend
\end{proof}

\medskip
\subsection{Proof of Theorem~\ref{thm:mainreduction}}
We start by showing a variant of Theorem~\ref{thm:gapreduction}.

\begin{theo}\label{thm:colgap}
Let $\Delta\in \mathbb N^+$ be a constant and $d: \mathbb N^+\to \mathbb
N^+$ a computable function. Then there is an \fpt-algorithm that on input a
graph $G$ and a parameter $k\in \mathbb N$ with $6 \mid k+1$ constructs a
bipartite graph $H= \big(A(H) \dotcup B(H), E(H)\big)$ together with two
colorings
\begin{eqnarray*}
\alpha: A(H)\to [\Delta s] & \text{and} & \beta: B(H)\to [d(k)]
\end{eqnarray*}
such that:
\begin{enumerate}
\item[(H1)] if $K_k\subseteq G$, then there are $\Delta s$ vertices of
    distinct $\alpha$-colors in $A(H)$ with $d(k)$ common neighbors of
    distinct $\beta$-colors in $B(H)$;

\item[(H2)] if $K_k\nsubseteq G$, then every $\Delta(s-1)+ 1$ vertices in
    $A(H)$ have at most $(k+1)!$ common neighbors in $B(H)$,
\end{enumerate}
where $s=\binom{k}{2}$. 
\end{theo}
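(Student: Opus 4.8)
The plan is to reuse, almost verbatim, the gap construction underlying properties (H1)/(H2) in the proof of Section~\ref{sec:32}, inserting one extra \emph{$\Delta$-fold duplication} on the $A$-side so that the single $\alpha$-color of each $A_0$-vertex is split into $\Delta$ distinct colors; this is exactly what upgrades the ``no''-side robustness from ``every $s$ vertices'' to ``every $\Delta(s-1)+1$ vertices''. Before doing anything, I would dispose of the degenerate regime: if $n$ lies below a suitable computable bound in $k$ (the bound guaranteeing both that Theorem~\ref{thm:gapreduction} applies and that it delivers at least $d(k)$ common neighbors), then $K_k\subseteq G$ can be decided by brute force in \fpt-time, and I simply output a complete bipartite graph $K_{\Delta s,\, d(k)}$ with all colors distinct in the ``yes'' case — so (H1) holds and (H2) is vacuous — or the edgeless bipartite graph on $[\Delta s]\dotcup[d(k)]$ with $\alpha=\beta=\mathrm{id}$ in the ``no'' case. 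Henceforth assume we are not in this regime.

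In the main regime, apply Theorem~\ref{thm:gapreduction} directly (legitimate since $6\mid k+1$ is assumed and, outside the degenerate regime, $\lceil n^{6/(k+6)}\rceil>(k+6)!$) to obtain in polynomial time a bipartite graph $H_0=(A_0\dotcup B_0,E_0)$ such that: if $K_k\subseteq G$ then some $s$-element $S^*\subseteq A_0$ has at least $\lceil n^{6/(k+1)}\rceil\ge d(k)$ common neighbors in $B_0$; and if $K_k\not\subseteq G$ then every $s$ vertices of $A_0$ have at most $(k+1)!$ common neighbors in $B_0$. Using Lemma~\ref{lem:cc}, compute the families $\Lambda_A:=\Lambda_{|A_0|,s}$ and $\Lambda_B:=\Lambda_{|B_0|,d(k)}$; both have size bounded by a computable function of $k$ times $n^{O(1)}$, because $s=\binom{k}{2}$ and $d(k)$ depend on $k$ only. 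Then set
\[
A(H):=A_0\times\Lambda_A\times\Lambda_B\times[\Delta],\qquad B(H):=B_0\times\Lambda_A\times\Lambda_B,
\]
join $(u,h_1,h_2,\ell)$ to $(v,h_1,h_2)$ exactly when $\{u,v\}\in E_0$, and put $\alpha(u,h_1,h_2,\ell):=(h_1(u)-1)\Delta+\ell\in[\Delta s]$ and $\beta(v,h_1,h_2):=h_2(v)\in[d(k)]$. Since $|A(H)|$ and $|B(H)|$ are a computable function of $k$ times $n^{O(1)}$, the whole algorithm runs in \fpt-time.

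Now I would verify the two properties. For (H1), in the ``yes'' case pick a $d(k)$-subset $T^*$ of the common neighbors of $S^*$, then $h_1\in\Lambda_A$ injective on $S^*$ and $h_2\in\Lambda_B$ injective on $T^*$ (both exist by Lemma~\ref{lem:cc}): the $\Delta s$ vertices $\{(u,h_1,h_2,\ell):u\in S^*,\ \ell\in[\Delta]\}$ receive pairwise distinct $\alpha$-colors (distinct $u$'s fall into distinct length-$\Delta$ blocks since $h_1$ is injective on $S^*$, and within a block $\ell$ separates them), and each is adjacent to every vertex of $\{(v,h_1,h_2):v\in T^*\}$, which consists of $d(k)$ vertices of pairwise distinct $\beta$-colors. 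For (H2), in the ``no'' case take $Q\subseteq A(H)$ with $|Q|=\Delta(s-1)+1$; any common neighbor of $Q$ lies in one slice $B_0\times\{h_1\}\times\{h_2\}$, so if $Q$ has a common neighbor at all then $Q\subseteq A_0\times\{h_1\}\times\{h_2\}\times[\Delta]$ for a fixed pair $(h_1,h_2)$. Each element of $A_0$ contributes at most $\Delta$ members to $Q$, so the projection $S$ of $Q$ to $A_0$ has $|S|\ge s$; and the map $(v,h_1,h_2)\mapsto v$ sends the common neighbors of $Q$ injectively into the common neighbors in $B_0$ of any fixed $s$-subset of $S$, of which there are at most $(k+1)!$ by the ``no''-case property of $H_0$.

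The construction is essentially forced; the sole design choice is the $\alpha$-coloring, and the one point to get right is the threshold $\Delta(s-1)+1$ in (H2) — it is precisely the least size that forces a set of $A(H)$-vertices inside one slice to project onto $\ge s$ vertices of $A_0$, which is exactly what lets us invoke Theorem~\ref{thm:gapreduction}'s ``no''-case bound. I do not anticipate any real obstacle beyond this observation and the routine bookkeeping that every step stays within \fpt-time.
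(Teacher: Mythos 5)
Your proof is correct and follows essentially the same route as the paper's: invoke Theorem~\ref{thm:gapreduction}, duplicate the $A$-side $\Delta$-fold, color-code both sides, and use the observation that $\Delta(s-1)+1$ vertices in one slice must project onto $\ge s$ distinct $A_0$-vertices. The only cosmetic difference is that the paper duplicates first and color-codes $A_1 = A_0\times[\Delta]$ with $\Lambda_{|A_1|,\Delta s}$, whereas you color-code $A_0$ with $\Lambda_{|A_0|,s}$ and then fold in the $[\Delta]$-index via the encoding $(h_1(u)-1)\Delta+\ell$; both yield the same colored graph up to relabeling.
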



\proof Let $G$ be a graph with $n$ vertices and $k\in \mathbb N$. Assume
without loss of generality
\begin{eqnarray*}
\ceil{n^{\frac{6}{k+6}}}> (k+6)!
 & \text{and} &
\ceil {n^{\frac{6}{k+1}}} \ge d(k).
\end{eqnarray*}
By Theorem~\ref{thm:gapreduction} we can construct in polynomial time a
bipartite graph $H_0= (A_0 \dotcup B_0, E_0)$ such that for $s:=
\binom{k}{2}$:
\begin{itemize}
\item if $K_k\subseteq G$, then there are $s$ vertices in $A_0$ with at
    least $d(k)$ common neighbors in $B_0$;

\item if $K_k\nsubseteq G$, then every $s$ vertices in $A_0$ have at most
    $(k+1)!$ common neighbors in $B_0$.
\end{itemize}
Define
\begin{align*}
A_1 := A_0\times [\Delta], \ B_1:= B_0, \ \text{and}\
E_1 := \big\{\{(u,i), v\} \bigmid \text{$(u,i)\in A_0\times [\Delta]$,
  $v\in B_0$, and $\{u,v\}\in E_0$}\big\}.
\end{align*}
It is easy to verify that in the bipartite graph $(A_1\dotcup B_1, E_1)$
\begin{itemize}
\item if $K_k\subseteq G$, then there are $\Delta s$ vertices in $A_1$
    with at least $d(k)$ common neighbors in $B_2$;

\item if $K_k\nsubseteq G$, then every $\Delta (s-1)+1$ vertices in $A_1$
    have at most $(k+1)!$ common neighbors in $B_1$.
\end{itemize}
Applying Lemma~\ref{lem:cc} on
\begin{eqnarray*}
\big(n \gets |A_1|, k\gets \Delta s\big)
& \text{and} &
\big(n \gets |B_1|, k\gets d(k)\big)
\end{eqnarray*}
we obtain two function families $\Lambda_A:= \Lambda_{|A_1|, \Delta s}$ and
$\Lambda_B:= \Lambda_{|B_1|, d(k)}$ with the stated properties. Finally the
desired bipartite graph $H$ is defined by $\Big((A_1\times \Lambda_A\times
\Lambda_B) \dotcup (B_1\times \Lambda_A\times \Lambda_B), E)\Big)$
with
\[
E:= \Big\{\big\{(u,h_1,h_2), (v, h_1, h_2)\big\} \Bigmid
 \text{$u\in A_1$, $v\in B_1$, $h_1\in \Lambda_A$, $h_2\in \Lambda_B$,
   and $\{u,v\}\in E_1$}\Big\}
\]
and the colorings
\[
\alpha(u,h_1,h_2):= h_1(u)
 \quad \text{and} \quad
\beta(v,h_1, h_2):= h_2(v). \benda
\]

\bigskip
\noindent\textbf{Setting the parameters.} Let $\Delta:= 2$. Recall that
$k\ge 3$, $s=\binom{k}{2}\ge 3$, and $c\in \mathbb N^+$. We first
define
\[
d:= d(k):= \left(30\cdot c^2\cdot (k+1)^2\right)^{4\cdot k^3+3c}.
\]
It is easy to check that:
\begin{enumerate}
\item[(i)] $d^{\frac{1}{2}-\frac{1}{2s}}> c\cdot s^c \ \Big(\!= c \cdot
    \binom{k}{2}^c\Big)$.

\item[(ii)] $d> \big(3(k+1)!\big)^{2s}$.

\item[(iii)] $d> \left(10\Delta s\cdot c^2\right)^{2\Delta s}$.
\end{enumerate}
Then let
\begin{equation}\label{eq:t}
t:= c \cdot d^{c-\frac{1}{2\Delta s}}.\footnotemark
\end{equation}
\footnotetext{Here, we assume $d^{c-\frac{1}{2\Delta s}}$ is an integer.
Otherwise, let $d\gets d^{2\Delta s}$ which maintains (i)-- (iii).} From
(ii), (iii), and~\eqref{eq:t} we conclude
\begin{equation}\label{eq:dt}
\Delta s c t< 0.1\cdot d^c,
 \quad \frac{c\cdot d^c}{3t}\le \sqrt[2\Delta s]{d},
 \quad \text{and}\ (k+1)!< \frac{\sqrt[2 s]{d}}{3}.
\end{equation}
Moreover by (i) and $\Delta= 2$ we have
\begin{equation}\label{eq:deltas}
c\cdot d^c+ c\Delta^c s^c d^{c-\frac{1}{2}+\frac{1}{2s}}< 2\Delta^cd^c.
\end{equation}

\bigskip
\noindent\textbf{Construction of $G_c$.} We invoke Theorem~\ref{thm:colgap}
to obtain $H=(A \dotcup B, E)$, $\alpha$, and $\beta$. Then we construct a
new graph $G_c= \big(V(G_c), E(G_c)\big)$ as follows. First, the vertex set
of $G_c$ is given by
\[
V(G_c):= \bigcup_{\i\in [d]^c} V_{\i}\dotcup C\dotcup W,
\]
where
\begin{equation*}
V_{\i}:= \big\{\v\in B^{c}\bigmid \beta(\v)= \i \big\}
 \quad \text{for every $\i\in [d]^c$},
\end{equation*}
\vspace{-.7cm}
\begin{eqnarray*}
C:= A \times [c]\times [t], & \text{and} &
W:= \Big\{w_{\v,\j,i}\Bigmid \text{$\v\in V_{\i}$ for some $\i\in [d]^{c}$,
 $\j\in [\Delta s]^c$ and $i\in [t]$}\Big\}.
\end{eqnarray*}
Moreover, $G_c$ contains the following types of edges.
\begin{enumerate}
\item[(E1)] For each $\i\in [d]^{c}$, $V_{\i}$ forms a clique.


\item[(E2)] Let $\i\in [d]^{c}$ and $\v, \v'\in V_{\i}$. If for all
    $\ell\in [c]$ we have $\v(\ell)\ne \v'(\ell)$ then $\left\{w_{\v, \j,
    i}, \v'\right\}\in E(G_c)$ for every $i\in [t]$ and\; $\j\in[\Delta
    s]^c$.

\item[(E3)] Let $i\in [t]$. Then $\big\{(u,\ell,i), w_{\v, \j, i}\big\}\in
    E(G_c)$ if $\{u, \v(\ell)\}\in E$ and\;
    $\j(\ell)= \alpha(u)$.

\item[(E4)] Let $u, u'\in A(H)$ with $u\ne u'$, $\ell\in[c]$, and $i\in
    [t]$. Then $\big\{(u,\ell,i), (u',\ell,i)\big\}\in E(G_c)$.
\end{enumerate}

Theorem~\ref{thm:mainreduction} then follows from the completeness and the
 soundness of this reduction.

\begin{lem}[Completeness]\label{lem:completness}
If $G$ contains $k$-clique, then $\ds(G_c)< 1.1
d^c$.
\end{lem}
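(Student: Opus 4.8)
The plan is to exhibit an explicit dominating set of $G_c$ of size less than $1.1 d^c$ under the assumption $K_k\subseteq G$, mirroring the structure of Claim~2 in Section~\ref{sec:32}. By (H1) of Theorem~\ref{thm:colgap}, there is a $K_{\Delta s, d(k)}$-biclique $K$ in $H$ with $\alpha$ injective (indeed surjective onto $[\Delta s]$) on $A(H)\cap K$ and $\beta$ surjective onto $[d(k)]$ on $B(H)\cap K$. From the $d$ "colorful" vertices in $B(H)\cap K$ I would build a set $\mathbf{B}^\ast\subseteq B$ of size exactly $d$ by choosing, for each color $c\in[d(k)]$, one vertex $b_c\in B(H)\cap K$ with $\beta(b_c)=c$. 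Then, taking $c$-fold products, the set $\{\mathbf v\in B^c : \mathbf v(\ell)\in\mathbf B^\ast \text{ for all }\ell\}$ would naturally want to be the "$B$-part" of the dominating set, but that has size $d^c$ which is already close to the budget, so I need to be more careful and instead dominate the $V_{\mathbf i}$'s by a single well-chosen vertex per block together with the clique edges (E1).

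The key steps, in order, are: First, pick for each $\mathbf i\in[d]^c$ one representative vertex $\mathbf v_{\mathbf i}\in V_{\mathbf i}$ lying in $K^c$ (i.e.\ with each coordinate in $B(H)\cap K$); by (E1) this single vertex dominates all of $V_{\mathbf i}$. This contributes $d^c$ vertices. Second, observe these representatives do \emph{not} dominate $C$ or $W$. For $W$: a vertex $w_{\mathbf v,\mathbf j,i}$ with $\mathbf v\in V_{\mathbf i}$ needs to be dominated; by (E3) it is adjacent to $(u,\ell,i)$ whenever $\{u,\mathbf v(\ell)\}\in E$ and $\mathbf j(\ell)=\alpha(u)$. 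Using the $A$-side of the biclique $K$: for each color $j\in[\Delta s]$ pick $u_j\in A(H)\cap K$ with $\alpha(u_j)=j$; since $K$ is a biclique, $\{u_j,b\}\in E$ for every $b\in B(H)\cap K$. So the set $C^\ast:=\{(u_j,\ell,i): j\in[\Delta s],\ \ell\in[c],\ i\in[t]\}$ of size $\Delta s\cdot c\cdot t$ dominates every $w_{\mathbf v,\mathbf j,i}$ with $\mathbf v\in K^c$ — which is exactly the $W$-vertices attached to our chosen representatives $\mathbf v_{\mathbf i}$. Third, $C^\ast$ also dominates all of $C$ via (E4) (the $A(H)\times\{\ell\}\times\{i\}$ clique), provided we ensure $C^\ast$ meets every $(\ell,i)$-slice, which it does. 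But we still must dominate the $W$-vertices $w_{\mathbf v,\mathbf j,i}$ for $\mathbf v\notin K^c$; here I use (E2): if $\mathbf v'\in V_{\mathbf i}$ differs from $\mathbf v$ in every coordinate then $\{w_{\mathbf v,\mathbf j,i},\mathbf v'\}\in E(G_c)$, and a generic representative $\mathbf v_{\mathbf i}$ will differ from most such $\mathbf v$; the cases where it doesn't are handled by also throwing in a second representative per block, or by noting that the $W$-vertices not covered this way are still few — this bookkeeping is the delicate part. Finally, sum the sizes: $d^c + \Delta s c t + (\text{correction terms})$, and invoke~\eqref{eq:dt}, specifically $\Delta s c t<0.1 d^c$, to conclude $|D|<d^c+0.1 d^c=1.1 d^c$.

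The main obstacle I anticipate is handling the $W$-vertices $w_{\mathbf v,\mathbf j,i}$ attached to vertices $\mathbf v\in V_{\mathbf i}$ that are \emph{not} in $K^c$: the representative trick via (E1) covers the $V_{\mathbf i}$'s cheaply, but $W$ is indexed over all $\mathbf v\in B^c$, not just the colorful ones, so naively one would need many $A$-side vertices. The resolution should be that the dominating set only needs to dominate $W$-vertices that exist as vertices of $G_c$, and (E2) lets a single representative $\mathbf v_{\mathbf i}$ dominate $w_{\mathbf v,\mathbf j,i}$ for \emph{all} $\mathbf v$ differing from $\mathbf v_{\mathbf i}$ coordinatewise; choosing two representatives per block that differ everywhere from each other then covers every $\mathbf v\in V_{\mathbf i}$ except possibly those sharing a coordinate with \emph{both}, and since we may also select the representatives inside $K^c$, the remaining exceptional $W$-vertices get dominated by $C^\ast$ via (E3). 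I would lay this out carefully, tracking that the total overhead beyond the $2d^c$ representatives is at most $\Delta s c t<0.1 d^c$, which closes the argument.
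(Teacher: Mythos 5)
Your construction of the dominating set is the right one (it is exactly the paper's $D := (B\cap K)^c \;\dot\cup\; \big((A\cap K)\times[c]\times[t]\big)$, since choosing one $u_j$ per $\alpha$-color recovers all of $A\cap K$), and the domination of $C$ and the blocks $V_{\mathbf i}$ via (E1)/(E4) is correct. But the part you flag as delicate --- dominating $w_{\mathbf v,\mathbf j,i}$ for $\mathbf v\notin (B\cap K)^c$ --- is where your plan genuinely breaks, and the fix you sketch does not work.

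First, a budget problem: you contemplate ``two representatives per block,'' which gives $2d^c$ vertices on the $B$-side plus $\Delta s\,c\,t < 0.1\,d^c$, i.e.\ $|D| > 2d^c$, far above the required $1.1\,d^c$. So any argument needing a second representative per block is dead on arrival. Second, two colorful representatives per block do not even exist: $\beta$ restricted to $B\cap K$ is a bijection onto $[d]$, so for each $\mathbf i\in[d]^c$ there is \emph{exactly one} tuple in $(B\cap K)^c$ with $\beta$-value $\mathbf i$. The missing idea is a clean dichotomy that needs no extra vertices: write $\mathbf v_{\mathbf i}$ for the unique element of $(B\cap K)^c\cap V_{\mathbf i}$. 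If $\mathbf v(\ell)\ne\mathbf v_{\mathbf i}(\ell)$ for \emph{all} $\ell$, then (E2) makes $\mathbf v_{\mathbf i}$ dominate $w_{\mathbf v,\mathbf j,i}$; this happens in particular whenever $\mathbf v(\ell)\notin B\cap K$ for all $\ell$. Otherwise $\mathbf v(\ell)=\mathbf v_{\mathbf i}(\ell)\in B\cap K$ for some $\ell$ (indeed, any coordinate where $\mathbf v$ agrees with $\mathbf v_{\mathbf i}$ must lie in $B\cap K$), and since every vertex of $B\cap K$ is adjacent in $H$ to all of $A\cap K$, one can pick $u\in A\cap K$ with $\alpha(u)=\mathbf j(\ell)$ and use (E3) to dominate $w_{\mathbf v,\mathbf j,i}$ by $(u,\ell,i)\in D$. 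This covers \emph{all} of $W$ with no correction term, giving $|D|=d^c+\Delta s\,c\,t<1.1\,d^c$.
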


\begin{lem}[Soundness]\label{lem:soundness}
If $G$ contains no $k$-clique then $\ds(G_c)>c\cdot d^c/3$.
\end{lem}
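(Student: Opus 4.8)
The plan is to prove Lemma~\ref{lem:soundness}, the soundness direction: assuming $G$ has no $k$-clique, every dominating set $D$ of $G_c$ has size $> c\cdot d^c/3$. The argument should parallel Claim~4 in Section~\ref{sec:32}, but now with the ``$c$-fold'' coordinate structure providing the amplification from ratio $3/2$ to ratio $\Theta(c)$. The overall idea: for each index $\i\in[d]^c$, the ``gadget'' $V_\i$ together with the $w_{\v,\j,i}$ vertices attached to it forces $D$ to pay a lot per index unless it uses vertices from $C=A\times[c]\times[t]$, and using $C$ effectively is impossible because of property (H2) of Theorem~\ref{thm:colgap}.

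Here is the step-by-step plan. First I would establish the analogue of Claim~1: for every $\i\in[d]^c$, the set $V_\i$ is ``private'' in the sense that certain vertices (playing the role of $x_c,y_c$ — here I would look for which vertices in $V(G_c)$ are only adjacent to $V_\i$; if the construction as written lacks explicit $x_\i$-type vertices, one argues instead that each $\v\in V_\i$ needs a dominator in $V_\i\cup C\cup W$, and track the count carefully). Let $D_\i := D\cap V_\i$. Set
\[
e := \bigl|\{\i\in[d]^c \mid |D_\i|\ge 2\}\bigr|.
\]
If $e$ is large — say $e > c\cdot d^c/3 - d^c$, so that $d^c + e > c\cdot d^c/3$ counting one vertex per index plus the surplus — we are done immediately. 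Hence assume $e$ is small, so for all but a $(1/c)$-fraction (roughly) of indices $\i$ we have $|D_\i|\le 1$; and among those indices, as in Section~\ref{sec:32}, the single dominator $\v$ must in fact lie in $V_\i\cap B^c$ (not be a ``boundary'' vertex). Next, using (E2), the $W$-vertices $w_{\v,\j,i}$ attached to these singleton $\v$'s cannot be dominated from inside any $V_{\i'}$, so they must be dominated by $D\cap(C\cup W)$. If $D\cap W$ is large we are done; otherwise a large set $Y\subseteq B(H)$ of these $\v$-coordinates must have all their $w$-vertices dominated by $D\cap C$. Then by averaging over the $t$ copies (indexed by $i\in[t]$) and over the $c$ coordinates $\ell\in[c]$, there is a fixed $i\in[t]$ and a small $X:=\{u : (u,\ell,i)\in D \text{ for some }\ell\}\subseteq A(H)$ of size at most roughly $(c\cdot d^c/3)/t\le d^{1/(2\Delta s)}$ using \eqref{eq:dt}. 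By (E3), every $\v$ (projected appropriately) in $Y$ has, in each coordinate $\ell$, a neighbor in $X$ realizing each $\alpha$-color, forcing $Y$'s projections to have many common neighbors among small subsets of $X$. Counting the number of $\Delta(s-1)+1$-element subsets of $X$ (which is at most $\sqrt{d}$ by \eqref{eq:dt}) and applying the pigeonhole principle yields a $(\Delta(s-1)+1)$-subset of $A(H)$ with $> (k+1)!$ common neighbors in $B(H)$, contradicting (H2) of Theorem~\ref{thm:colgap}.

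The bookkeeping obstacle — and the part I expect to be hardest — is the $c$-dimensional accounting. In Section~\ref{sec:32} there was a single coordinate and the ``$3/2$'' came from ``$d$ indices each needing one vertex, plus $d/2$ needing a second.'' Here each index $\i\in[d]^c$ can need up to $c+1$ vertices (one ``base'' vertex plus, potentially, one per coordinate $\ell\in[c]$ because (E2) only triggers when $\v,\v'$ differ in \emph{all} coordinates, so a single extra vertex in $V_\i$ may fail to dominate the $w$-vertices associated with a given coordinate). One must set up a nested sequence of ``if this quantity is large we are done, else restrict'' reductions — over indices with $|D_\i|\ge 2$, then over the coordinate where the $w$-vertices are undominated, then over $D\cap W$, then over $D\cap C$, and finally the copy $i\in[t]$ — keeping each ``bad fraction'' below a threshold so their sum stays under $c\cdot d^c/3$, while ensuring the surviving set $Y$ still has size $> 3(k+1)!\cdot\sqrt{d} \ge (k+1)!\cdot\sqrt d$ (via \eqref{eq:dt}) to reach the contradiction with (H2). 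The inequalities \eqref{eq:dt} and \eqref{eq:deltas} are evidently tailored precisely so that all these fractional losses fit; the work is to route the argument so each one is invoked at the right place. I would organize the proof as a chain of ``Case: $\dots$ is large $\Rightarrow$ done'' paragraphs ending in the pigeonhole contradiction, mirroring the structure of the Claim~4 proof but carrying the extra factor of $c$ and the extra coordinate quantifier throughout.
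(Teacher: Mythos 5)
Your high-level skeleton — nest a sequence of ``if this part of $D$ is large we are done, else restrict,'' average over the copies $i\in[t]$, and finish by combining (H2) with a pigeonhole argument — does match the shape of the paper's argument. But the proposal misses the two genuinely new technical ingredients that make the $c$-fold amplification work, and the parts of the accounting you do sketch are incorrect.

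First, your opening accounting ($e:=|\{\i : |D_\i|\ge 2\}|$, ``done if $e > c\cdot d^c/3 - d^c$, so $d^c+e$ is large'') presupposes $D\cap V_\i\ne\emptyset$ for every $\i$, but the graph $G_c$ has \emph{no} analogue of the $x_c,y_c$ vertices from Section~\ref{sec:32}; a vertex in $V_\i$ can be dominated from $W$ via (E2), so nothing forces a dominator in each $V_\i$ and that ``one-per-index'' base count is unavailable. The paper instead uses the threshold $|D\cap V_\i|\ge c+1$ for at least $d^c/3$ indices. Crucially, for the remaining indices (with $|D\cap V_\i|\le c$) it employs a \emph{diagonal construction}: set $\v_\i(\ell):=\v_\ell(\ell)$ where $D\cap V_\i=\{\v_1,\dots,\v_{c'}\}$. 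Because (E2) dominates $w_{\v_\i,\j,i}$ only from $\v'$ differing from $\v_\i$ in \emph{all} $c$ coordinates, and the diagonal $\v_\i$ agrees with each $\v_\ell$ in coordinate $\ell$, none of the up-to-$c$ vertices in $D\cap V_\i$ can dominate $w_{\v_\i,\j,i}$. Your sketch never introduces this tuple and its threshold of $2$ cannot be pushed to $c+1$ without it, so the first reduction step in your chain does not go through.

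Second, the final contradiction in the paper is not obtained directly from a pigeonhole on $\Delta(s-1)+1$-subsets of $X$. That pigeonhole only bounds the size of the ``bad'' set $Y:=\{v\in B : |N^H(v)\cap X|>\Delta(s-1)\}$ by $\sqrt{d}\,(k+1)!$. One still has to split the undominated $w$-vertices $Z$ into $Z_1$ (those with some coordinate of $\v_\i$ in $Y$) and $Z_2$ (the rest), bound $|Z_1|$ using $|Y|$, and bound $|Z_2|$ using a separate product-structure counting lemma (Lemma~\ref{lem:productgap}: if $V\subseteq[t]^c$ has a labelling $\theta:V\to[c]$ such that each labelled coordinate-image has size $\le t-\Delta$, then $|V|\le t^c-\Delta^c$). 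It is this lemma, applied to the sets $J_\i\subseteq[\Delta s]^c$ of ``surviving'' colour-tuples, that yields a deficit of $\Delta^c$ per index and hence the $\Theta(c)$ separation after the arithmetic with \eqref{eq:deltas}. Your proposal has no analogue of this lemma or of the $Z_1/Z_2$ decomposition; the place where you flag ``the bookkeeping obstacle'' is precisely where these two missing ideas are needed, so as written the proposal does not constitute a proof.
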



\medskip
We first show the easier completeness.

\begin{proof}[of Lemma~\ref{lem:completness}]
By (H1) in Theorem~\ref{thm:colgap}, if $G$ contains a subgraph isomorphic
to $K_k$, then the bipartite graph $H$ has a $K_{\Delta s,d}$-subgraph $K$
such that $\alpha(A\cap K)= [\Delta s]$ and $\beta(B\cap K)= [d]$. Let
\[
D:= (B\cap K)^c\dotcup \big((A\cap K)\times[c]\times [t]\big).
\]
Obviously, $|D|= d^c+ \Delta sct< 1.1\cdot d^c$ by~\eqref{eq:dt}. And (E1)
and (E4) imply that $D$ dominates every vertex in $C$ and every vertex in
$V_{\i}$ for all $\i\in [d]^c$.

To see that $D$ also dominates $W$, let $w_{\v,\j,i}$ be a vertex in $W$.
First consider the case where $\v(\ell)\notin B\cap K$ for all $\ell\in[c]$.
Since $\beta\big((B\cap K)^c\big)= [d]^c$, there exists a vertex $\v'\in
(B\cap K)^c$ with $\beta(\v')= \beta(\v)$ and $\v(\ell)\neq\v'(\ell)$ for
all $\ell\in[c]$. Then $w_{\v, \j, i}$ is dominated by $\v'$ because of
(E2).

Otherwise assume $\v(\ell)\in B\cap K$ for some $\ell\in[c]$, then $A\cap
K\subseteq N^H(\v(\ell))= \big\{u\in A\bigmid \{u, \v(\ell)\}\in
E\big\}$.
There exists a vertex $u\in A\cap K$ such that $\alpha(u)= \j(\ell)$ and
$\big\{\v(\ell),u\big\}\in E$. By (E3), $w_{\v,\j,
i}$ is adjacent to $(u,\ell,i)$. \proofend
\end{proof}

\medskip
\subsection{Soundness}

\begin{lem}\label{lem:productgap}
Suppose $c, \Delta, t\in\mathbb{N}^+$ and $\Delta< t$. Let $V\subseteq
[t]^c$. If there exists a function $\theta: V\to [c]$ such that for all
$i\in [c]$ we have
\begin{equation}\label{eq:V}
\Big|\big\{\v(i)\bigmid \text{$\v\in V$ and $\theta(\v)= i$}\big\}\Big|
 \le t- \Delta,
\end{equation}
then $|V|\le t^c- \Delta^c$.
\end{lem}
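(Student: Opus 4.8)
The plan is to prove the contrapositive by an inductive counting argument on the dimension $c$. Suppose $V \subseteq [t]^c$ and $\theta: V \to [c]$ is a function satisfying~\eqref{eq:V} for every coordinate; I want to conclude $|V| \le t^c - \Delta^c$. For the base case $c = 1$, condition~\eqref{eq:V} with $i = 1$ says the single-coordinate projection of $\{\v \in V : \theta(\v) = 1\} = V$ has size at most $t - \Delta$, so $|V| \le t - \Delta = t^1 - \Delta^1$. For the inductive step I would split $V$ according to whether $\theta$ takes the value $c$ or not. Write $V_{=c} := \{\v \in V : \theta(\v) = c\}$ and $V_{<c} := V \setminus V_{=c} = \{\v \in V : \theta(\v) \le c-1\}$.

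For $V_{<c}$, I fix the last coordinate: for each $a \in [t]$ let $V_{<c}^{a} := \{\v \in V_{<c} : \v(c) = a\}$, and identify it with its projection onto the first $c-1$ coordinates, a subset of $[t]^{c-1}$. Restricting $\theta$ to $V_{<c}^{a}$ gives a function into $[c-1]$, and condition~\eqref{eq:V} for $i \in [c-1]$ still holds for this restriction because shrinking the set only shrinks the projections. So by the induction hypothesis $|V_{<c}^{a}| \le t^{c-1} - \Delta^{c-1}$ for each $a$, and summing over the $t$ possible values of $a$ gives $|V_{<c}| \le t\,(t^{c-1} - \Delta^{c-1}) = t^c - t\,\Delta^{c-1}$. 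For $V_{=c}$, I use condition~\eqref{eq:V} with $i = c$: the set of last coordinates $\{\v(c) : \v \in V, \theta(\v) = c\}$ has size at most $t - \Delta$, so $V_{=c}$ lives inside $S \times [t]^{c-1}$ for some $S \subseteq [t]$ with $|S| \le t - \Delta$, hence $|V_{=c}| \le (t - \Delta)\,t^{c-1}$.

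Adding the two bounds yields
\[
|V| \le \big(t^c - t\,\Delta^{c-1}\big) + \big(t - \Delta\big)\,t^{c-1}
 = 2t^c - t\,\Delta^{c-1} - \Delta\,t^{c-1},
\]
and it remains to check $2t^c - t\Delta^{c-1} - \Delta t^{c-1} \le t^c - \Delta^c$, i.e. $t^c + \Delta^c \le t\Delta^{c-1} + \Delta t^{c-1}$, i.e. $t^{c-1}(t - \Delta) \le \Delta^{c-1}(t - \Delta)$ — wait, that direction is wrong, so this naive split does not close. The real obstacle is exactly this: the two crude bounds double-count the ``all coordinates large'' region, and one must instead argue more carefully, allocating the $\Delta^c$-sized ``forbidden box'' across the recursion. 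The fix I would pursue is to not split on $\theta^{-1}(c)$ versus the rest globally, but rather, for the coordinate slices $V_{<c}^a$, to observe that only $\Delta$ of the $t$ values $a$ can be ``bad'' in the sense relevant to coordinate $c$ — more precisely, reprove the statement with $[t]$ replaced by arbitrary finite sets $T_1,\dots,T_c$ and bound $|V| \le \prod|T_i| - \prod \Delta$ isn't quite it either. The cleanest route, which I expect to work, is: induct on $c$, and for each value $a \in [t]$ of the last coordinate, let $\theta$-restricted sets on the slice satisfy the hypothesis in dimension $c-1$ \emph{but with $\Delta$ possibly replaced by $\Delta$ or by something smaller depending on whether $a$ is among the $\le t-\Delta$ allowed values for the coordinate $\theta$ happens to point at}; carefully, the $a$'s that are \emph{not} in the coordinate-$c$ allowed set contribute slices whose $\theta$ never equals $c$, hence dimension $c-1$ with the same $\Delta$, giving $\le t^{c-1}-\Delta^{c-1}$ each, while there are at most $\Delta$ remaining values of $a$, each contributing a trivial bound $\le t^{c-1}$. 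This gives $|V| \le (t-\Delta)(t^{c-1}-\Delta^{c-1}) + \Delta\cdot t^{c-1} = t^c - \Delta t^{c-1} - t\Delta^{c-1} + \Delta^c + \Delta t^{c-1} = t^c - t\Delta^{c-1} + \Delta^c$, which still isn't $\le t^c - \Delta^c$. So the genuinely delicate point — and where I'd spend the effort — is choosing the right coordinate to peel and the right induction loading; I believe the correct statement peels the coordinate on which $\theta$ is \emph{least} used and tracks the count $|\theta^{-1}(i)|$, and that a clean induction closes once the bookkeeping is set up so that the ``defect'' $\prod T_i - |V|$ is bounded below by $\Delta^c$ term-by-term.
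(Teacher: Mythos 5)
Your final slicing attempt --- partitioning $V$ by the value of the last coordinate and distinguishing whether that value lies in the ``coordinate-$c$ allowed set'' $C_c := \{\v(c) : \v\in V,\ \theta(\v)=c\}$ --- is the right idea and is essentially the paper's decomposition. The gap is a single counting error, and it is exactly why you wrongly concluded the approach does not close. You assert ``there are at most $\Delta$ remaining values of $a$,'' i.e.\ $|C_c|\le\Delta$. But condition~\eqref{eq:V} with $i=c$ gives $|C_c|\le t-\Delta$; it is the \emph{complement} $[t]\setminus C_c$ that has at least $\Delta$ elements. With the correct allocation, set $m := |C_c|\le t-\Delta$. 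For $a\in C_c$ the slice $V_a := \{\v\in V : \v(c)=a\}$ satisfies only the trivial bound $|V_a|\le t^{c-1}$; for $a\notin C_c$ every $\v\in V_a$ has $\theta(\v)\in[c-1]$, and the restriction of $\theta$ still satisfies~\eqref{eq:V} for every $i\in[c-1]$ (passing to a subset only shrinks the projections), so the induction hypothesis gives $|V_a|\le t^{c-1}-\Delta^{c-1}$. Summing,
\[
|V| \;\le\; m\, t^{c-1} + (t-m)\bigl(t^{c-1}-\Delta^{c-1}\bigr)
 \;=\; t^c - (t-m)\,\Delta^{c-1}
 \;\le\; t^c - \Delta\cdot\Delta^{c-1}
 \;=\; t^c - \Delta^c,
\]
with no stray ``$+\Delta^c$'' term; the argument closes and there is no need for the speculative ``peel the least-used coordinate'' rescue. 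The paper's own proof reaches the identical arithmetic by slightly different bookkeeping: it pads $C_c$ up to size exactly $t-\Delta$, collects into $A$ all vectors whose last coordinate lies in the padded set (bounded trivially by $(t-\Delta)t^{c-1}$), projects the remainder $B$ onto the first $c-1$ coordinates, defines an auxiliary $\theta'$ on that projection by a tie-breaking rule so that~\eqref{eq:V} transfers, and uses that fibers of the projection have size at most $\Delta$ to get $|B|\le\Delta\bigl(t^{c-1}-\Delta^{c-1}\bigr)$. Once the $\Delta$ versus $t-\Delta$ mix-up is fixed, your slice-by-slice variant is a hair more direct because each slice inherits~\eqref{eq:V} immediately and no auxiliary $\theta'$ has to be invented.
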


\begin{proof}
When $c=1$, we have $|V|\le t- \Delta$ by~\eqref{eq:V}. Suppose the lemma
holds for $c\le n$ and consider $c= n+1$. Given $V\subseteq [t]^{n+1}$ and
$\theta$, let
\[
C_{n+1}:= \big\{\v(n+1)\bigmid \text{$\v\in V$ and $\theta(\v)= n+1$} \big\}.
\]
By~\eqref{eq:V}, $|C_{n+1}|\le t- \Delta$. If $|C_{n+1}|< t- \Delta$, we add
$\big(t-\Delta-|C_{n+1}|\big)$ arbitrary integers from $[t]\setminus
C_{n+1}$ to $C_{n+1}$. So we have $|C_{n+1}|= t- \Delta$. Let $A:=
\big\{\v\in V\bigmid \v(n+1)\in C_{n+1}\big\}$ and $B:= V\setminus A$. It
follows that
\begin{equation}\label{eq:A}
|A|\le (t- \Delta)t^{c-1},
\end{equation}
$\Big|\big\{\v(n+1)\bigmid \v\in B\big\}\Big|\le \Delta$, and
$\theta(\v)\in[c-1]$ for $\v\in B$. Let
\[
V':= \big\{(v_1,v_2,\cdots,v_n)\bigmid
 \exists v_{n+1}\in [t], (v_1,v_2,\cdots,v_n,v_{n+1})\in B\big\}.
\]
We define a function $\theta': V'\to [c-1]$ as follows. For all $\v'\in V'$,
choose $\v\in B$ with the minimum $\v(c)$ such that for all $i\in[c-1]$ it
holds $\v'(i)= \v(i)$. By the definition of $V'$, such a $\v$ must exist,
and we let $\theta'(\v'):= \theta(\v)$.  By~(\ref{eq:V}),
$\Big|\big\{\v'(i)\bigmid \text{$\v'\in V'$ and $\theta'(\v')=
i$}\big\}\Big|\le t- \Delta$ for all $i\in [c-1]$. Applying the induction
hypothesis, we get $|V'|\le t^{c-1}- \Delta^{c-1}$. Obviously,
\begin{equation}\label{eq:B}
|B|\le \Delta|V'|\le \Delta t^{c-1}- \Delta^c.
\end{equation}
From~\eqref{eq:A} and~\eqref{eq:B}, we deduce that $|V|= |A|+ |B|\le
(t-\Delta)t^{c-1}+ \Delta t^{c-1}- \Delta^c \le t^c- \Delta^c$. \proofend
\end{proof}


We are now ready to prove the soundness of our reduction.

\begin{proof}[of Lemma~\ref{lem:soundness}]
Let $D$ be a dominating set of $G_c$.
Define
\[
a:= \Big|\big\{\i\in [d]^c \bigmid |D\cap V_{\i}|\ge c+1 \big\}\Big|.
\]
If $a> d^c/3$, then $|D|\ge (c+1)a> c\cdot d^c/3$ and we are done.

\medskip
So let us consider $a\le d^c/3$. Thus, the set
\[
I:= \big\{\i\in [d]^c \bigmid |D\cap V_{\i}|\le c \big\}
\]
has size $|I|\ge 2d^c/3$. Let $\i\in I$ and assume that $D\cap V_{\i}=
\big\{\v_1, \v_2, \ldots, \v_{c'}\big\}$ for some $c'\le c$. We define a
$\v_\i\in V_\i$ as follows. If $c'=0$, we choose an arbitrary $\v_\i\in
V_\i$.\footnote{Since the coloring $\beta$ is obtained by the color-coding
used in the proof of Theorem~\ref{thm:colgap}, for every $b\in[d]$ it holds
that $\{v\in B\mid \beta(v)=b\}\neq\emptyset$, hence $V_\i\neq
\emptyset$.} Otherwise, let
\[
\v_\i(\ell):=
 \begin{cases}
  \v_\ell(\ell) & \text{for all $\ell\in[c']$}; \\
  \v_1(\ell) & \text{for all $c'<\ell\le c$}. \\
 \end{cases}
\]
Obviously, $\beta(\v_\i)= \i$.

\smallskip
(E2) implies that for every $\j\in [\Delta s]^c$ and every $i\in [t]$, the
vertex $w_{\v_{\i},\j,i}$ is not dominated by $D\cap V_{\i}$. Observe that
$w_{\v_{\i},\j,i}$ cannot be dominated by other $D\cap V_{\i'}$ with $\i'\ne
\i$ either, by (E2) and (E3). Therefore every vertex
in the set
\[
W_1:= \big\{w_{\v_{\i}, \j,i}
  \bigmid \text{$\i\in I$, $\j\in [\Delta s]^c$, and $i\in [t]$} \big\}
\]
is \emph{not} dominated by $D\cap \bigcup_{\i\in [d]^c} V_{\i}$. As a
consequence, $W_1$ has to be dominated by or included in $D\cap (C\cup W)$.

\medskip
If $|D\cap W_1|> c\cdot d^c/3$, then again we are done. So suppose $|D\cap
W_1|\le c\cdot d^c/3$ and let $W_2:= W_1\setminus D$. It follows that $W_2$
has to be dominated by $D\cap C$. Once again we only need to consider the
case $|D\cap C|\le c\cdot d^c/3$, and hence there is an $i' \in [t]$ such
that
\begin{equation}\label{eq:DCsize}
\Big|D\cap \big(A\times[c]\times \{i'\}\big) \Big|
 \le \frac{c\cdot d^c}{3 t}.
\end{equation}
Then we define
\[
Z:= \big\{w_{\v,\j, i}\in W_2\bigmid i=i'\big\}
 = \big\{w_{\v_{\i},\j, i'}\bigmid \text{$\i\in I$,
     $\j\in[\Delta s]^c$, and $w_{\v_{\i},\j, i'}\notin D$}\big\}.
\]
So $Z$ has to be dominated by $D\cap C$, and in particular those vertices of
the form $(u,\ell,i')\in D\cap C$. Moreover,
\begin{equation}\label{eq:Z}
|Z|\ge \Delta^cs^c|I|- |D\cap W_1|\ge \Delta^cs^c |I|- c\cdot d^c/3.
\end{equation}

Our next step is to upper bound $|Z|$. To that end, let
\[
X:= \big\{u\in A\bigmid \text{$(u,\ell,i')\in D$ for some $\ell\in [c]$}\big\}.
\]
Thus $Z$ is dominated by those vertices $(u,\ell,i')$ with $u\in X$. And
by~\eqref{eq:DCsize}
\[
|X|\le \frac{c\cdot d^c}{3 t}.
\]
Set
\[
Y:= \Big\{v\in B\Bigmid \big|N^H(v)\cap X\big|> \Delta(s-1)\Big\}.
\]
Recall that $c\cdot d^c/(3t)\le \sqrt[2\Delta s]{d}$ by~\eqref{eq:dt}. Hence
$X$ has at most $\sqrt{d}$ different subsets of size $\Delta(s-1)+ 1$, i.e.,
\[
\left|\binom{X}{\Delta(s-1)+1}\right|\le |X|^{\Delta(s-1)+1}\le |X|^{\Delta s}
\le \sqrt{d}.
\]
We should have
\begin{equation}\label{eq:Y}
|Y|\le \sqrt{d}\cdot (k+1)!\le \frac{d^{\frac{1}{2}+\frac{1}{2s}}}{3},
\end{equation}
where the second inequality is by~\eqref{eq:dt}.
 Otherwise, by the pigeonhole
principle, there exists a $(\Delta(s-1)+1)$-vertex set of $X\subseteq A(H)$
having at least $|Y|/\sqrt{d}> (k+1)!$
common neighbors in $Y\subseteq B(H)$. However, if $G$ contains no
$k$-clique, then by (H2) every $\big(\Delta(s-1)+1\big)$-vertex set of
$A(H)$ has at most $(k+1)!$ common neighbors in $B(H)$, and we obtain a
contradiction.

\medskip
Let
\begin{align*}
Z_1 := & \big\{w_{\v, \j, i'}\in Z\bigmid \text{there exists an $\ell\in[c]$ with $\v(\ell)\in Y$}\big\}
  \quad\Big(\!\!\subseteq Z\Big) \\
   = & \big\{w_{\v_{\i}, \j, i'}\bigmid \text{$\i\in I$, $\j\in [\Delta s]^c$, $w_{\v_{\i}, \j, i'}\notin D$,
   and there exists an $\ell\in[c]$ with $\v_\i(\ell)\in Y$}\big\} \\
 \text{and}\qquad
Z_2 := & Z\setminus Z_1
 = \big\{w_{\v_{\i}, \j, i'}\bigmid \text{$\i\in I$, $\j\in [\Delta s]^c$, $w_{\v_{\i}, \j, i'}\notin D$,
   and $\v_{\i}(\ell)\notin Y$ for all $\ell\in [c]$}\big\}.
\end{align*}
Moreover, let $I_1:=\{\i\in I\mid \text{there exists a $w_{\v_\i, \j, i'}\in
Z_1$}\}$. From the definition, we can deduce that
\[
\text{for all $\i\in I_1$ there exists an $\ell\in [c]$ such that $\i(\ell)\in \beta(Y)$}.
\]
Then $|I_1|\le c|Y|d^{c-1}$ and hence
\[
|Z_1|\le|I_1|\Delta^cs^c\le c |Y| d^{c-1} \Delta^c s^c.
\]

To estimate $|Z_2|$, let us fix an $\i\in I$ and thus fix the tuple
$\v_{\i}\in B^c$, and consider the set
\[
J_{\i}:= \big\{\;\j\in[\Delta s]^c\bigmid w_{\v_{\i},\j, i'}\in Z_2 \big\}.
\]
Recall that $Z$ is dominated by those vertices $(u, \ell, i')$ with $u\in
X$, so for every $\j\in J_\i$ the vertex $w_{\v_i, \j, i'}$ is adjacent to
some $(u, \ell, i')$ in the dominating set $D$ with $u\in X$. Moreover, for
every $\ell\in [c]$, in the original graph $H$ the vertex $\v_{\i}(\ell)\in
B$ has at most $\Delta(s-1)$ neighbors in $X$, by the fact that
$\v_{\i}(\ell)\notin Y$ and our definition of the set $Y$.

Define a function $\theta: J_{\i}\to [\Delta s]$ such that for each\; $\j\in
J_{\i}$, if $w_{\v_{\i}, \j, i'}$ is adjacent to a vertex $(u, \ell, i')\in
D$ with $u\in X$, then $\theta(\j)= \ell$. As argued above, such a $(u,
\ell, i')$ must exist, and if there are more than one such, choose an
arbitrary one.

Let \;$\j\in J_{\i}$ and $\ell:= \theta(\j)$. By (E3), in the graph $H$ the
vertex $\v_{\i}(\ell)$ is adjacent to some vertex $u\in X$ with $\alpha(u)=
\j(\ell)$. It follows that for each $\ell\in[c]$ we have
\[
\Big|\big\{\;\j(\ell)\bigmid \text{$\j\in J_{\i}$ and $\theta(\j)= \ell$}\big\}\Big|
\le \Big|\big\{\alpha(u)\bigmid \text{$u\in X$ adjacent to $\v_{\i}(\ell)$}\big\}\Big|
\le \Delta(s-1).
\]
Applying Lemma~\ref{lem:productgap}, we obtain
\begin{equation*}
\big|J_{\i}\big|\le \Delta^cs^c- \Delta^c.
\end{equation*}
Then
\[
\big|Z_2\big|= \sum_{\i\in I} \big|J_{\i}\big|
 \le |I|(\Delta^cs^c-\Delta^c).
\]
By~\eqref{eq:Z} and the definition of $Z_1$ and $Z_2$, we should have
\[
\Delta^cs^c|I|- c\cdot d^c/3\le |Z|=|Z_1|+|Z_2|
 \le c |Y| d^{c-1} \Delta^c s^c + |I|(\Delta^cs^c-\Delta^c).
\]
That is,
\[
c\cdot d^c/3 + c |Y| d^{c-1} \Delta^c s^c\ge \Delta^c |I|\ge 2\Delta^cd^c/3.
\]
Combined with~\eqref{eq:Y}, we have
\[
c\cdot d^c +c \Delta^c s^cd^{c-\frac{1}{2}+\frac{1}{2s}}\ge 2\Delta^cd^c,
\]
which contradicts the equation~\eqref{eq:deltas}. \proofend
\end{proof}

\section{Some Consequences}\label{sec:consq}


\begin{proof}[of Corollary~\ref{cor:main1}]
Let $c\in \mathbb N^+$, and assume that $\mathbb A$ is a polynomial time
algorithm which on input a graph $G= (V,E)$ with $\ds(G)\le \beta(|V|)$
outputs a dominating set $D$ with $|D|\le c\cdot \ds(G)$. Without loss of
generality, we further assume that given $0\le k\le n$ it can be tested in
time $n^{O(1)}$ whether $k> c\cdot \beta(n)$.

Now let $G$ be an arbitrary graph. We first simulate $\mathbb A$ on $G$, and
there are three possible outcomes of $\mathbb A$.
\begin{itemize}
\item $\mathbb A$ does not output a dominating set. Then we know $\ds(G)
    > \beta(|V|)$. So in time
    \[
    2^{O(|V|)} \le 2^{O(\beta^{-1}(\ds(G)))}
    \]
    we can exhaustively search for a minimum dominating set $D$ of $G$.

\item $\mathbb A$ outputs a dominating set $D_0$ with $|D_0|> c\cdot
    \beta(|V|)$. We claim that again $\ds(G)> \beta(|V|)$. Otherwise, the
    algorithm $\mathbb A$ would have behaved correctly with
    \[
    |D_0|\le c\cdot \ds(G)\le c\cdot \beta(|V|).
    \]
    So we do the same brute-force search as above.

\item $\mathbb A$ outputs a dominating set $D_0$ with $|D_0|\le c\cdot
    \beta(|V|)$. If $|D_0|> c\cdot \ds(G)$, then
    \[
    c\cdot \beta(|V|) \ge |D_0|> c\cdot \ds(G), \quad \text{i.e.},
     \ \beta(|V|)> \ds(G),
    \]
    which contradicts our assumption for $\mathbb A$. Hence, $|D_0|\le
    c\cdot \ds(G)$ and we can output $D:= D_0$.
\end{itemize}
To summarize, we can compute a dominating set $D$ with $|D|\le c\cdot
\ds(G)$ in time $f(\ds(G))\cdot |G|^{O(1)}$ for some computable $f:\mathbb
N\to \mathbb N$. This is a contradiction to Theorem~\ref{thm:main1}.
\proofend
\end{proof}

\medskip
Now we come to the approximability of the monotone circuit satisfiability
problem.
\noptprob[6]{$\mcs$}{A monotone circuit $C$}{A satisfying assignment $S$ of
$C$}{The weight of $|S|$}{$\min$}
Recall that a Boolean circuit $C$ is \emph{monotone} if it contains no
negation gates; and the \emph{weight} of an assignment is the number of
inputs assigned to $1$.

As mentioned in the Introduction, Marx showed~\cite{mar13} that \mcs\ has no
fpt approximation with any ratio $\rho$ for circuits of depth 4, unless
$\FPT= \W 2$.

\begin{cor}\label{cor:wsat}
Assume $\FPT\ne \W 1$. Then \mcs\ has no constant fpt approximation for
circuits of depth 2.
\end{cor}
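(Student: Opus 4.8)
The plan is to observe that a depth-$2$ monotone circuit is nothing but a positive \CNF\ formula --- a single output $\wedge$-gate fed by $\vee$-gates that read only input variables --- and that $\mds$ is literally a problem of this shape. Given a graph $G=(V,E)$, I would build in polynomial time the \emph{domination circuit} $C_G$ with one input variable $x_v$ for every $v\in V$ and
\[
C_G \;:=\; \bigwedge_{v\in V}\ \bigvee_{u\in N[v]} x_u,
\]
where $N[v]=\{v\}\cup\{u\mid\{u,v\}\in E\}$ is the closed neighbourhood of $v$. This is a monotone circuit (no negation gates) of depth $2$: each $\vee$-gate reads only inputs, and the single output $\wedge$-gate reads the $\vee$-gates; isolated vertices merely produce unit clauses and cause no trouble. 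The first step of the proof is then to record the exact correspondence: identifying an assignment with the set $S\subseteq V$ of variables it sets to $1$, the assignment satisfies $C_G$ iff every $v\in V$ has a neighbour-or-itself in $S$, i.e.\ iff $S$ is a dominating set of $G$. Consequently the minimum weight of a satisfying assignment of $C_G$ is exactly $\ds(G)$, and weight-$w$ satisfying assignments of $C_G$ correspond bijectively to size-$w$ dominating sets of $G$.

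Next I would argue by contradiction. Suppose that for some constant $c$ there is an fpt approximation algorithm $\mathbb B$ for $\mcs$ restricted to depth-$2$ circuits with (constant) ratio $c$. On input a graph $G$ and $k\in\mathbb N$ with $\ds(G)\le k$, run $\mathbb B$ on $(C_G,k)$. Since the minimum satisfying weight of $C_G$ equals $\ds(G)\le k$, the algorithm outputs a satisfying assignment $S$ with $|S|\le c\cdot k$, and by the correspondence above $S$ is a dominating set of $G$ of size at most $c\cdot k$. The running time is $f(k)\cdot|C_G|^{O(1)}=g(k)\cdot|G|^{O(1)}$ for computable $g$. This is precisely an fpt approximation algorithm for $\pds$ with constant ratio $c$ in the sense of Definition~\ref{def:paraapp}, contradicting Theorem~\ref{thm:main1} (phrased, if one wishes, via the equivalence of \cite[Proposition~5]{chegro07} quoted just after Definition~\ref{def:paraapp}, which freely converts between the bounds $c\cdot\ds(G)$ and $\rho(k)\cdot k$ under $\ds(G)\le k$).

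Since the reduction $G\mapsto C_G$ is entirely elementary and weight-preserving, there is essentially no real obstacle here. The only points that need care are (a) fixing the circuit conventions so that a top $\wedge$ over $\vee$-gates genuinely counts as depth $2$, and checking that the circuit is monotone (it uses only $\wedge$ and $\vee$), and (b) the trivial bookkeeping that the approximation ratio transfers with no loss --- a constant-factor approximation for $\mcs$ becomes a constant-factor approximation for $\pds$ because the optimum is preserved on the nose. I would therefore expect the write-up to be short, the substance being entirely carried by Theorem~\ref{thm:main1}.
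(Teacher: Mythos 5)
Your proof is correct and follows essentially the same route as the paper: build the depth-2 monotone ``domination circuit'' for $G$, observe the weight-preserving bijection between satisfying assignments and dominating sets, and invoke Theorem~\ref{thm:main1}. In fact your clause $\bigvee_{u\in N[v]} x_u$ over the closed neighbourhood is the cleaner formulation --- the paper writes $\bigvee_{\{u,v\}\in E} X_u$, which taken literally omits $X_v$ and would encode total domination rather than domination, so your version quietly corrects a small slip.
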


\begin{proof}
This is an immediate consequence of Theorem~\ref{thm:main1} and the
following well-known approximation-preserving reduction from \mcs\ to \mds.
Let $G= (V,E)$ be a graph. We define a circuit
\[
C(G)= \bigwedge_{v\in V} \bigvee_{\{u,v\}\in E} X_u.
\]
There is a one-one correspondence between a dominating set in $G$ of size
$k$ and a satisfying assignment of $C(G)$ of weight $k$. \proofend
\end{proof}

\begin{rem}
Of course the constant ratio in Corollary~\ref{cor:wsat} can be improved
according to Theorem~\ref{thm:main2}.
\end{rem}

\section{Conclusions}\label{sec:con}

We have shown that $\pds$ has \emph{no} fpt approximation with any constant
ratio, and in fact with a ratio slightly super-constant. The immediate
question is whether the problem has fpt approximation with \emph{some} ratio
$\rho:\mathbb N\to \mathbb N$, e.g., $\rho(k)= 2^{2^k}$. We tend to believe
that it is not the case.

Our proof does not rely on the deep PCP theorem, instead it exploits the gap
created in the \W 1-hardness proof of the parameterized biclique problem
in~\cite{lin15}. In the same paper, the second author has already proved
some inapproximability result which was shown by the PCP theorem before.
Except for the derandomization using algebraic geometry in~\cite{lin15} the
proofs are mostly elementary. Of course we are working under some stronger
assumptions, i.e., \ETH\ and $\FPT\ne \W 1$. It remains to be seen whether
we can take full advantage of such assumptions to prove lower bounds
matching those classical ones or even improve them as in
Corollary~\ref{cor:main1}.

\medskip
\subsection*{Acknowledgement}
We thank Edouard Bonnet for pointing out a mistake in an earlier version of the paper.

}
\bibliographystyle{plain}
\bibliography{appds}
\end{document}